\theoremstyle{definition}
\newtheorem{lemma}{Lemma}
\newtheorem{proposition}{Proposition}
\newtheorem{remark}{Remark}
\DeclareMathOperator*{\argmax}{arg\,max}
\newcommand{\Psiop}{\Psi} 
\begin{document}

\title{Waveform Design Based on Mutual Information Upper Bound For Joint Detection and Estimation}

\author{Ruofeng Yu, 
	Caiguang Zhang, 
	Chenyang Luo, 
	Mengdi Bai, 
	Shangqu Yan,  \\
	Wei Yang, and
	Yaowen Fu
	\thanks{This work was supported by the National Natural Science Foundation of China (Grant 61871384), the Science Fund for Distinguished Young Scholars of Hunan Province (Grant 2024JJ2066), and the Science and Technology Innovation Program of Hunan Province (Grant 2022RC1092). {{\itshape (Corresponding author: Yaowen Fu)}}}
	\thanks{All authors are with the College of Electronic Science and Technology, National University of Defense Technology, Changsha 410073, China (e-mail: {yuruofeng17@outlook.com; luochenyang19@nudt.edu.cn; baimengdi@nudt.edu.cn; shangqu\_yan@163.com; yw850716@sina.com; fuyaowen@nudt.edu.cn}). }
	\thanks{Caiguang Zhang is with the Shanghai Radio Equipment Research Institute, Shanghai 201100, China (e-mail: zhangiguang@163.com).}
}

\markboth{~Vol.~14, No.~8, August~2021}%
{Yu \MakeLowercase{\textit{et al.}}: Waveform Design Based on Mutual Information Upper Bound For Joint Detection and Estimation}


\maketitle

\begin{abstract}
Adaptive radar waveform design grounded in information-theoretic principles is critical for advancing cognitive radar performance in complex environments. This paper investigates the optimization of phase-coded waveforms under constant modulus constraints to jointly enhance target detection and parameter estimation. We introduce a unified design framework based on maximizing a Mutual Information Upper Bound (MIUB), which inherently reconciles the trade-off between detection sensitivity and estimation precision without relying on ad hoc weighting schemes. To model realistic, potentially non-Gaussian statistics of target returns and clutter, we adopt Gaussian Mixture Distributions (GMDs), enabling analytically tractable approximations of the MIUB's constituent Kullback-Leibler divergence and mutual information terms. To address the resulting non-convex problem, we propose the Phase-Coded Dream Optimization Algorithm (PC-DOA), a tailored metaheuristic that leverages hybrid initialization and adaptive exploration-exploitation mechanisms specifically designed for phase-variable optimization. Numerical simulations demonstrate the effectiveness of the proposed method in achieving modestly better detection-estimation trade-off.
\end{abstract}

\begin{IEEEkeywords}
Radar waveform design, mutual information upper bound, target detection, parameter estimation, constant modulus constraint.
\end{IEEEkeywords}

\section{Introduction}\label{sec1}
\IEEEPARstart{M}{odern} radar systems constitute indispensable sensing platforms across diverse mission-critical domains, including strategic defense, autonomous navigation, remote sensing, and environmental monitoring \cite{skolnik_radar_2008}. The efficacy of these systems, particularly concerning fundamental performance metrics such as target detection probability, parameter estimation accuracy, and resolution capabilities, is inextricably linked to the characteristics of the transmitted waveforms \cite{levanon_radar_2004}. Recent advancements in cognitive radar paradigms have catalyzed a shift from employing static, predefined waveform libraries towards dynamically adaptive waveform synthesis, tailored in response to the operational environment and sensing objectives \cite{haykin_cognitive_2006}. This evolution is paramount for contemporary multifunction radar systems required to perform simultaneous detection, tracking, imaging, and classification tasks, often within increasingly congested and contested spectral environments \cite{guerci_cognitive_2014}.

Target detection, as a cornerstone function of radar systems, has long been the primary focus of waveform optimization research. The Neyman-Pearson (NP) criterion provides the theoretical foundation for optimal detector design, aiming to maximize detection probability under a fixed false alarm constraint. However, its practical implementation is hindered by the intractability of the likelihood ratio test, which often involves solving non-analytic integrals in realistic radar environments. As a result, surrogate metrics that maintain monotonic relationships with detection probability have been explored. Classical studies demonstrate that maximizing signal-to-clutter-plus-noise ratio (SCNR)~\cite{xu_probabilistically_2022,xu_radar_2025} can yield computationally efficient solutions for certain target and clutter models in additive white Gaussian noise (AWGN) channels. However, SCNR-based optimization does not necessarily guarantee the optimal detection performance. Alternative metrics such as deflection coefficients~\cite{zhu_information_2017} have been introduced to enhance detection robustness for weak target, while relative entropy metrics~\cite{tang_relative_2015,xu_relative_2025} have been utilized to optimize the separability between target-present and target-absent hypotheses. These techniques are subsequently implemented in advanced radar architectures, such as frequency diverse arrays (FDA)~\cite{gui_2017_coherent} and multiple-input multiple-output (MIMO) radar systems~\cite{tang_efficient_2018}. Additionally, a variety of sophisticated modulation schemes, including linear frequency modulation (LFM), polyphase-coded FM (PCFM)~\cite{blunt_PCFM_2014}, and orthogonal frequency division multiplexing (OFDM)~\cite{sen_2010_multiobjective}, have been explored to enhance transmitter adaptability across different operational scenarios.
Beyond detection, precise parameter estimation (e.g., range, Doppler, angle, scattering coefficients) is another critical radar function driving waveform design. Early work, such as DeLong's analysis \cite{delong_optimum_1970}, highlighted the influence of LFM waveform parameters on range-Doppler coupling and estimation accuracy via the Fisher Information Matrix (FIM). Contemporary estimation-centric waveform optimization primarily follows two main directions: (1) Minimizing estimation error bounds, such as the Cramér-Rao Lower Bound (CRLB) \cite{van_2023_transmit} or Mean Squared Error (MSE) \cite{tang_constrained_2021}, typically by optimizing scalar functions (e.g., trace or determinant) of the FIM; and (2) Maximizing the Mutual Information (MI) between the target parameters and the received signal \cite{bell_information_1993, iidriss_waveform_2021}, thereby minimizing the posterior uncertainty about the parameters given the observations. 

However, in many practical radar scenarios, detection and estimation are intrinsically coupled tasks. Target detection often necessitates prior or concurrent estimation of unknown parameters, while reliable detection outcomes can subsequently refine parameter estimates. This inherent interdependency motivates the pursuit of joint waveform optimization strategies that explicitly address the trade-offs between these objectives. Formulating this as a multi-objective optimization problem is natural, but standard solution techniques exhibit limitations. The weighted-sum method, while simple, requires careful, often heuristic, selection of weighting factors and may not explore the full Pareto front effectively \cite{shen_2024_waveform, yu_2024_waveform}. The $\epsilon$-constraint method offers a more structured approach by optimizing one objective subject to constraints on the others \cite{jiu_2015_wideband, hao_2019_efficient}, but can still struggle to characterize the complete set of optimal trade-offs. More recently, learning-based approaches, such as deep reinforcement learning (DRL), have shown promise for adaptive waveform selection \cite{thornton_2020_Deep}, yet concerns regarding training complexity, stability, and generalization persist.

Information-theoretic measures offer a potentially more principled avenue for unifying detection and estimation goals. For instance, Xiao \cite{xiao_2022_waveform} utilized a combination of MI and KL divergence in a two-stage adaptive scheme, demonstrating empirical benefits but lacking a rigorous foundational framework for their combination. This paper aims to bridge this gap by proposing a novel joint optimization framework rooted in a variational perspective on mutual information. Specifically, we leverage the decomposition of an expected KL divergence term, often considered as a Mutual Information Upper Bound (MIUB), into the sum of the conventional MI (related to estimation) and the KL divergence between the marginal likelihoods under the two hypotheses (related to detection). Maximizing this MIUB thus provides a unified objective that inherently balances both tasks without resorting to arbitrary weights.

The main contributions of this paper are threefold:
\begin{itemize}
	\item \textbf{Unified Variational Formulation:} We introduce a principled joint detection-estimation optimization framework based on maximizing the MIUB. This approach naturally integrates KL divergence (for detection enhancement) and mutual information (for estimation accuracy) derived from a single information-theoretic quantity, avoiding ad-hoc scalarization.
	\item \textbf{Flexible Statistical Modeling via GMDs:} We employ Gaussian Mixture Distributions (GMDs) to statistically model potentially complex target impulse responses and clutter environments. GMDs offer significant representational flexibility while permitting analytically tractable approximations of the MI and KL divergence terms within the MIUB objective via structured covariance matrix analysis.
	\item \textbf{Phase-Coded Dream Optimization Algorithm (PC-DOA):} We develop a specialized optimization algorithm, PC-DOA, designed to efficiently solve the resultant highly non-convex waveform design problem under practical constant modulus constraints. PC-DOA features a hybrid initialization strategy combining domain knowledge with diverse sampling techniques and employs adaptive exploration-exploitation mechanisms tailored for optimizing phase sequences, enhancing convergence robustness and solution quality.
\end{itemize}

The remainder of this paper is organized as follows: Section~\ref{sec2} details the signal model incorporating GMDs and formally derives the MIUB-based optimization problem formulation. Section~\ref{sec3} presents the proposed Phase-Coded Dream Optimization Algorithm (PC-DOA), including its phase representation, hybrid initialization, and two stage optimization strategy. Section~\ref{sec4} provides some numerical results evaluating the performance of the proposed method against benchmark techniques and analyzing the characteristics of the optimized waveforms. Finally, Section~\ref{sec5} concludes the paper and discusses potential directions for future research.

\textit{Notations:} Bold lowercase and uppercase letters denote vectors and matrices, respectively. The operators include $(\cdot)^{\mathrm{T}}$ for transpose, $(\cdot)^{\mathrm{H}}$ for conjugate transpose, $\ast$ for convolution, $\mathrm{tr}(\cdot)$ for trace, $\det(\cdot)$ for determinant, and $\|\cdot\|_{\mathrm{F}}$ for the Frobenius norm. The $n$-dimensional complex space is represented by $\mathbb{C}^n$. The Kullback-Leibler divergence and mutual information are denoted as $\mathcal{D}_{\text{KL}}(p \parallel q)$ and $I(\mathbf{x}; \mathbf{y})$, respectively. The signal model follows hypotheses $\mathcal{H}_0$ for clutter-only and $\mathcal{H}_1$ for target-plus-clutter. GMD parameters include mixture weights $\alpha_k$, $\beta_m$, and $\gamma_\ell$, as well as covariance matrices $\mathbf{R}_k$ for clutter, $\mathbf{Q}_m$ for the target, and $\bm{\Sigma}_\ell$ for the composite response under $\mathcal{H}_1$. The imaginary unit is $j = \sqrt{-1}$, and “s.t.” denotes “subject to”. $\mathcal{CN}(\bm{\mu},\mathbf{R})$ denotes a complex Gaussian distribution with mean $\bm{\mu}\in\mathbb{C}^n$ and covariance matrix $\mathbf{R}$, whose probability density function is given by
\begin{equation*}
	\mathcal{CN}(\bm{\mu},\mathbf{R})=\frac{\exp\left[-(\mathbf{x}-\bm{\mu})^{\mathrm{H}}\mathbf{R}^{-1}(\mathbf{x}-\bm{\mu})\right]}{\pi^n \det(\mathbf{R})}
\end{equation*}

\section{Signal Model and Problem Formulation}
\label{sec2}
This section establishes the mathematical framework for the radar system, detailing the signal model and formulating the waveform optimization problem for joint enhancement of target detection and parameter estimation.
\subsection{Signal Model}
We consider a monostatic radar transmitting a baseband equivalent waveform $s(t)$. The received signal $y(t)$ comprises target returns potentially corrupted by clutter and noise. Assuming a linear time-invariant target interaction channel, the continuous-time received signal is
\begin{equation}
	y(t) = s(t) \ast x(t) + w(t),
\end{equation}
where $x(t)$ is the stochastic target impulse response (TIR) characterizing target scattering, and $w(t)$ is the composite clutter and noise process. The TIR $x(t)$ and process $w(t)$ are assumed statistically independent of each other and of $s(t)$.
For digital processing, we use a discrete-time representation. The transmitted waveform is $\mathbf{s} = [s_{1}, \dots, s_{N}]^{\mathrm{T}} \in \mathbb{C}^{N}$ ($N$ samples/code length). The discrete TIR is $\mathbf{x} = [x_{1}, \dots, x_{N_{T}}]^{\mathrm{T}} \in \mathbb{C}^{N_{T}}$ ($N_{T}$ effective temporal support). The discrete clutter/noise is $\mathbf{w} = [w_{1}, \dots, w_{N+N_{T}-1}]^{\mathrm{T}} \in \mathbb{C}^{N+N_{T}-1}$.
The target detection task is formulated as a binary hypothesis test on the received data $\mathbf{y} \in \mathbb{C}^{N+N_{T}-1}$:
\begin{equation}
	\begin{cases}
		\mathcal{H}_{0}: & \mathbf{y} = \mathbf{w}  \\
		\mathcal{H}_{1}: & \mathbf{y} = \mathbf{Sx} + \mathbf{w}
	\end{cases}
	\label{eq:hypothesis_test}
\end{equation}
where $\mathbf{S} \in \mathbb{C}^{(N+N_{T}-1) \times N_{T}}$ is the Toeplitz convolution matrix derived from $\mathbf{s}$:
\begin{equation}
	[\mathbf{S}]_{i,j} = 
	\begin{cases}
		s_{i-j+1}, & \text{if } 1 \le j \le N_T, j \le i \le j+N-1 \\
		0, & \text{otherwise}.
	\end{cases}
	\label{eq:conv_matrix}
\end{equation}

To capture complex, potentially non-Gaussian statistics of realistic target scattering and clutter, we model both $\mathbf{x}$ and $\mathbf{w}$ using Gaussian Mixture Distributions (GMDs)~\cite{chen_joint_2023,gu2019information}. GMDs offer flexibility in density approximation and analytical tractability. Assuming zero-mean processes (typical after coherent processing), the models are:
\begin{align}
	p(\mathbf{w}) &= \sum_{k=1}^{K} \alpha_k \mathcal{CN}(\mathbf{w}; \mathbf{0}, \mathbf{R}_k), \label{eq4}\\
	p(\mathbf{x}) &= \sum_{m=1}^{M} \beta_m \mathcal{CN}(\mathbf{x}; \mathbf{0}, \mathbf{Q}_m), \label{eq5}
\end{align}
where $\mathcal{CN}(\cdot; \bm{\mu}, \mathbf{\Sigma})$ is the complex Gaussian PDF. Mixture weights satisfy $\alpha_k > 0$, $\sum_{k}\alpha_k=1$, $\beta_m > 0$, $\sum_{m}\beta_m=1$. Covariance matrices $\{\mathbf{R}_k\}$ and $\{\mathbf{Q}_m\}$ encapsulate second-order statistics (e.g., power spectra) of the components.
Under the GMD assumption, the likelihood functions $p(\mathbf{y} | \mathcal{H}_i)$ are also GMDs:
\begin{align}
	p_{0}(\mathbf{y}) &\triangleq p(\mathbf{y} | \mathcal{H}_0) = \sum_{k=1}^{K}\alpha_k \mathcal{CN}(\mathbf{y}; \mathbf{0}, \mathbf{R}_k), \label{eq:p0_gmd}\\ 
	p_{1}(\mathbf{y}) &\triangleq p(\mathbf{y} | \mathcal{H}_1) = \sum_{\ell=1}^{L}\gamma_\ell \mathcal{CN}(\mathbf{y}; \mathbf{0}, \mathbf{\Sigma}_\ell), \label{eq:p1_gmd}
\end{align}
where $L = MK$, $\gamma_\ell = \alpha_k \beta_m$ for the composite index $\ell \leftrightarrow (k,m)$, and the $\mathcal{H}_1$ component covariance is
\begin{equation}
	\mathbf{\Sigma}_\ell = \mathbf{S}\mathbf{Q}_{m}\mathbf{S}^{\mathrm{H}} + \mathbf{R}_{k}, \quad (\text{for } \ell \leftrightarrow (k,m)).
	\label{eq:Sigma_ell}
\end{equation}

Note the dependence of $\mathbf{\Sigma}_\ell$ on the waveform $\mathbf{s}$ via $\mathbf{S}$.

\begin{remark}
GMDs offer significant representational flexibility, serving as universal approximators capable of fitting arbitrary smooth distributions given sufficient components. Their use is supported by established practice in waveform design, demonstrating excellent performance for modeling measured and simulated target/clutter data \cite{chen_joint_2023, gu2019information}. Furthermore, quantitative analyses show that GMD approximation errors for challenging distributions like K-distributions can yield small deviations in the first- and second-order statistics crucial for detection/estimation \cite{Blacknell_Target_2000}. The GMD structure permits analytically tractable approximations in some specific cases although the potential computational complexity would increase with more components.
\end{remark}

\subsection{Problem Formulation}
\label{subsec:problem_Formulation}

The main objective is to design the waveform $\mathbf{s} \in \mathbb{C}^N$ to jointly optimize detection and estimation performance under the constant modulus constraint. Let $E_s$ denote the fixed total transmitted energy. The constraint is explicitly $|s_n| = c = \sqrt{E_s/N}$ for $n=1, \dots, N$.
A direct multi-objective formulation considers maximizing metrics for detection, $\mathfrak{D}(\mathbf{s})$, and estimation, $\mathfrak{E}(\mathbf{s})$, simultaneously:
\begin{equation}
	(\mathcal{P}_1): \quad 
	\begin{alignedat}{2} 
		&\max_{\mathbf{s}} \quad && [\mathfrak{D}(\mathbf{s}), \mathfrak{E}(\mathbf{s})] \\ 
		&\text{s.t.}       && |s_n| = c, \quad n = 1, \dots, N.
	\end{alignedat}
	\label{eq:P1_concise}
\end{equation}

As previously discussed, resolving trade-offs using standard scalarization techniques may lead to suboptimal or ill-posed solutions. Within the framework of information theory, certain information-theoretic quantities can be rigorously derived to characterize a range of performance metrics. Motivated by the metrics in \cite{pmlr-v97-poole19a}, we adopt the Mutual Information Upper Bound (MIUB) to provide a unified criterion. The MIUB is formally defined as the expected Kullback–Leibler (KL) divergence:
\begin{equation}
	\text{MIUB}(\mathbf{s}) \triangleq \mathbb{E}_{\mathbf{x}}\left[\mathcal{D}_{\mathrm{KL}}\left(p(\mathbf{y}|\mathbf{x}, \mathcal{H}_1) \parallel p_0(\mathbf{y})\right)\right].
	\label{eq:miub_def_concise}
\end{equation}

Its significance lies in the decomposition established by the following lemma.
\begin{lemma}[MIUB Decomposition] \label{lemma:miub_decomp_concise}
	The MIUB decomposes into the sum of the mutual information and the KL divergence between marginal likelihoods:
	\begin{equation}
		\text{MIUB}(\mathbf{s}) = \underbrace{I(\mathbf{x};\mathbf{y})}_{\mathfrak{E}(\mathbf{s})} + \underbrace{\mathcal{D}_{\mathrm{KL}}\left(p_1(\mathbf{y}) \parallel p_0(\mathbf{y})\right)}_{\mathfrak{D}(\mathbf{s}) }.
		\label{eq:miub_identity_concise}
	\end{equation}
\end{lemma}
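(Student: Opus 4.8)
The plan is to prove the identity by direct manipulation of the defining integral, using the standard ``insert-the-marginal'' decomposition of the log-ratio. First I would write out the definition \eqref{eq:miub_def_concise} explicitly as a double integral, expanding both the expectation over $\mathbf{x}$ and the inner KL divergence:
\[
\text{MIUB}(\mathbf{s}) = \int p(\mathbf{x}) \int p(\mathbf{y}\mid\mathbf{x},\mathcal{H}_1)\,\log\frac{p(\mathbf{y}\mid\mathbf{x},\mathcal{H}_1)}{p_0(\mathbf{y})}\,d\mathbf{y}\,d\mathbf{x}.
\]
The central step is to introduce the marginal likelihood $p_1(\mathbf{y})$ inside the logarithm and split it additively, writing $\log\frac{p(\mathbf{y}\mid\mathbf{x},\mathcal{H}_1)}{p_0(\mathbf{y})} = \log\frac{p(\mathbf{y}\mid\mathbf{x},\mathcal{H}_1)}{p_1(\mathbf{y})} + \log\frac{p_1(\mathbf{y})}{p_0(\mathbf{y})}$, which partitions the integral into two pieces that I then treat separately.

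For the first piece, I would observe that $p(\mathbf{x})\,p(\mathbf{y}\mid\mathbf{x},\mathcal{H}_1)$ is precisely the joint density of $(\mathbf{x},\mathbf{y})$ under $\mathcal{H}_1$, and that $p_1(\mathbf{y}) = \int p(\mathbf{x})\,p(\mathbf{y}\mid\mathbf{x},\mathcal{H}_1)\,d\mathbf{x}$ is the corresponding marginal (consistent with \eqref{eq:p1_gmd}--\eqref{eq:Sigma_ell}). Hence the integrand ratio equals $\frac{p(\mathbf{x},\mathbf{y})}{p(\mathbf{x})\,p_1(\mathbf{y})}$, so this piece is exactly the mutual information $I(\mathbf{x};\mathbf{y})$ by definition. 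For the second piece, the factor $\log\frac{p_1(\mathbf{y})}{p_0(\mathbf{y})}$ carries no dependence on $\mathbf{x}$, so I can perform the $\mathbf{x}$-integration first; this collapses $\int p(\mathbf{x})\,p(\mathbf{y}\mid\mathbf{x},\mathcal{H}_1)\,d\mathbf{x}$ into $p_1(\mathbf{y})$, leaving $\int p_1(\mathbf{y})\log\frac{p_1(\mathbf{y})}{p_0(\mathbf{y})}\,d\mathbf{y} = \mathcal{D}_{\mathrm{KL}}(p_1 \parallel p_0)$. Summing the two pieces then yields \eqref{eq:miub_identity_concise}.

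I do not anticipate a substantive analytic obstacle: the argument is an exact, chain-rule-style identity valid for any admissible densities, and the GMD structure of $p_0$ and $p_1$ plays no role in the decomposition itself. The only points requiring care are bookkeeping ones, namely verifying that the marginalization defining $p_1(\mathbf{y})$ matches the model in \eqref{eq:p1_gmd}, and tacitly invoking the integrability and absolute-continuity conditions (e.g.\ $p_1 \ll p_0$ and $p(\cdot\mid\mathbf{x},\mathcal{H}_1)\ll p_0$) that keep each KL term and the mutual information finite and that justify interchanging the order of integration via Fubini.
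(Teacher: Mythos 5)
Your proof is correct: the ``insert-the-marginal'' split of the log-ratio, identifying the first piece as $I(\mathbf{x};\mathbf{y})$ and collapsing the $\mathbf{x}$-integration in the second piece to obtain $\mathcal{D}_{\mathrm{KL}}(p_1 \parallel p_0)$, is exactly the canonical argument behind this identity. The paper itself states the lemma without proof (appealing to the variational-bound literature it cites), and your derivation is precisely the standard one that justifies it, including the appropriate care about absolute continuity and Fubini.
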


Based on this Lemma, where $I(\mathbf{x};\mathbf{y})$ relates to estimation and $\mathcal{D}_{\mathrm{KL}}(p_1 \parallel p_0)$ relates to detection \cite{bell_information_1993, cover_elements_nodate}, maximizing the MIUB naturally balances both tasks. This leads to the single-objective problem:
\begin{equation}
	(\mathcal{P}_2): \quad
	\begin{alignedat}{2}
		&\max_{\mathbf{s}} \quad && \text{MIUB}(\mathbf{s}) \\
		&\text{s.t.}       && |s_n|=c, \quad n = 1, \dots, N.
	\end{alignedat}
	\label{eq:P2_concise}
\end{equation}

Directly computing the MI and KL divergence terms in \eqref{eq:miub_identity_concise} using the GMD likelihoods $p_0(\mathbf{y})$ in \eqref{eq:p0_gmd} and $p_1(\mathbf{y})$ in \eqref{eq:p1_gmd} is generally intractable. We therefore consider the following approximations.
\begin{lemma}[MI Approximation~\cite{gu2019information}] \label{lemma2_concise}
	The mutual information term $I(\mathbf{x};\mathbf{y})$ is approximated by $\overline{\mathfrak{E}}(\mathbf{s})$:
	\begin{align}
		\MoveEqLeft[2] \overline{\mathfrak{E}}(\mathbf{s}) 
		= \log\left[\sum_{k=1}^{K}\alpha_k\det\left(\mathbf{R}_k\right)^{-1}\right] \nonumber \\ 
		&\quad - \log\left[\sum_{\ell=1}^{L}\gamma_{\ell}\det\left(\mathbf{\Sigma}_{\ell}\right)^{-1}\right].
		\label{eq15_concise}
	\end{align}
\end{lemma}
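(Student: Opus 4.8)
\noindent\emph{Proof strategy.}

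The plan is to compute the mutual information via its entropy decomposition $I(\mathbf{x};\mathbf{y}) = h(\mathbf{y}) - h(\mathbf{y}\mid\mathbf{x})$, where $h(\cdot)$ denotes differential entropy, and then to replace each Gaussian-mixture entropy by a tractable surrogate. First I would pin down the two densities that appear under $\mathcal{H}_1$. The marginal of $\mathbf{y}$ is exactly the mixture $p_1(\mathbf{y})$ of \eqref{eq:p1_gmd}. For the conditional, fixing $\mathbf{x}$ merely shifts the clutter/noise mixture \eqref{eq4} by the deterministic mean $\mathbf{S}\mathbf{x}$, so that $p(\mathbf{y}\mid\mathbf{x},\mathcal{H}_1) = \sum_{k=1}^{K}\alpha_k\,\mathcal{CN}(\mathbf{y};\mathbf{S}\mathbf{x},\mathbf{R}_k)$.

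The obstruction is that the differential entropy of a Gaussian mixture has no closed form, which is precisely why the statement is an approximation rather than an identity. I would resolve this with the zeroth-order surrogate $h(p)\approx -\log p(\bm{\mu})$, obtained by evaluating the integrand of $h(p)=-\mathbb{E}_{p}[\log p(\mathbf{y})]$ at the mean $\bm{\mu}$ of $p$. Since all components are zero-mean, the marginal mean is $\mathbf{0}$; using $\mathcal{CN}(\mathbf{0};\mathbf{0},\mathbf{\Sigma}_\ell)=[\pi^{n}\det(\mathbf{\Sigma}_\ell)]^{-1}$ with $n=N+N_T-1$, this gives $h(\mathbf{y})\approx n\log\pi - \log[\sum_{\ell=1}^{L}\gamma_\ell\det(\mathbf{\Sigma}_\ell)^{-1}]$.

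The decisive cancellation happens in the conditional term. Its mean is $\mathbf{S}\mathbf{x}$, and evaluating each component at its own mean annihilates the exponential, so $p(\mathbf{S}\mathbf{x}\mid\mathbf{x},\mathcal{H}_1)=\pi^{-n}\sum_{k=1}^{K}\alpha_k\det(\mathbf{R}_k)^{-1}$, which does not depend on $\mathbf{x}$. Consequently the surrogate for the conditional entropy $h(\mathbf{y}\mid\mathbf{x})=\mathbb{E}_{\mathbf{x}}[\,\cdot\,]$ collapses to $n\log\pi - \log[\sum_{k=1}^{K}\alpha_k\det(\mathbf{R}_k)^{-1}]$ with no outer expectation left to perform. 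Subtracting the two surrogates, the $n\log\pi$ terms cancel and I recover exactly $\overline{\mathfrak{E}}(\mathbf{s})$ as claimed.

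I expect the real difficulty to be justificatory rather than computational: arguing why the mean-value surrogate is the appropriate one and characterizing its error. I would support it by checking the single-component limit $K=M=1$, where $-\log p(\bm{\mu})$ reproduces the exact Gaussian entropy up to a constant depending only on $n$ (the familiar gap between $n\log(\pi e)+\log\det\mathbf{R}$ and $n\log\pi+\log\det\mathbf{R}$), and by noting that this dimension-dependent constant is common to both entropy terms and therefore cancels in the difference defining $I(\mathbf{x};\mathbf{y})$. A fuller accuracy analysis would then bound the residual for overlapping components, but for the purpose of the stated approximation the cancellation argument above suffices.
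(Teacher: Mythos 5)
Your derivation is correct and matches the approach underlying the paper: Appendix~\ref{app:error_analysis} shows that $\overline{\mathfrak{E}}(\mathbf{s})$ arises from the decomposition $I(\mathbf{x};\mathbf{y}) = h(\mathbf{y}) - h(\mathbf{y}|\mathbf{x})$ with each mixture entropy replaced by the mean-evaluation surrogate $\tilde{h}(p) = -\log p(\mathbf{y}_0)$ at $\mathbf{y}_0 = \mathbf{0}$, which is exactly your argument, including the cancellation of the $n\log\pi$ terms and the $\mathbf{x}$-independence of the conditional surrogate. Your single-Gaussian sanity check and the observation that the dimension-dependent gap cancels in the difference are consistent with the paper's error analysis, which likewise attributes the residual error to the second-order (curvature) term.
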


\begin{lemma}[KL Divergence Approximation \cite{goldbergerEfficient2003}] \label{lemma3_concise}
	The KL divergence $\mathcal{D}_{\mathrm{KL}}(p_1 \parallel p_0)$ is approximated via component-wise matching by $\overline{\mathfrak{D}}(\mathbf{s})$:
	\begin{align}
		\MoveEqLeft[2] \overline{\mathfrak{D}}(\mathbf{s}) =\sum_{\ell=1}^{L}\gamma_{\ell}\Biggl[\log\frac{\gamma_{\ell}}{\alpha_{k^\star(\ell)}} \nonumber \\
		&+\mathcal{D}_{\mathrm{KL}}\Bigl[\mathcal{CN}(\mathbf{0},\mathbf{\Sigma}_{\ell})\parallel \mathcal{CN}(\mathbf{0},\mathbf{R}_{k^\star(\ell)})\Bigr]\Biggr],
		\label{eq16_concise}
	\end{align}
	where $k^\star(\ell) = \arg\min_{k \in \{1,\dots,K\}} J(k, \ell)$, with
	\begin{equation}
		J(k, \ell) = \log\frac{\gamma_{\ell}}{\alpha_k}+\mathcal{D}_{\mathrm{KL}}\Bigl[\mathcal{CN}(\mathbf{0},\mathbf{\Sigma}_{\ell})\parallel \mathcal{CN}(\mathbf{0},\mathbf{R}_k)\Bigr].
		\label{eq:J_k_ell_concise}
	\end{equation}
	
    The KL divergence between the zero-mean complex Gaussian components is:
	\begin{align}
		\MoveEqLeft[2] \mathcal{D}_{\mathrm{KL}}\Bigl[\mathcal{CN}(\mathbf{0},\mathbf{A})\parallel \mathcal{CN}(\mathbf{0},\mathbf{B})\Bigr] \nonumber \\ 
		&= \operatorname{tr}\left(\mathbf{B}^{-1}\mathbf{A}\right)-\log\det\left(\mathbf{B}^{-1}\mathbf{A}\right)-n',
		\label{eq:KL_gaussian_concise}
	\end{align}
	with $n' = N+N_T-1$.
\end{lemma}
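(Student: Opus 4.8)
The plan is to establish the two displayed results separately, since the closed-form complex-Gaussian divergence in \eqref{eq:KL_gaussian_concise} is \emph{exact} whereas the mixture formula \eqref{eq16_concise} is the Goldberger matched-component \emph{approximation}. I would dispatch the exact piece first. Writing out the circularly-symmetric densities from the Notations, the log-ratio inside $\mathcal{D}_{\mathrm{KL}}[\mathcal{CN}(\mathbf{0},\mathbf{A})\parallel\mathcal{CN}(\mathbf{0},\mathbf{B})]$ collapses (the $\pi^{n'}$ factors cancel) to $\log\det(\mathbf{B})-\log\det(\mathbf{A})+\mathbf{y}^{\mathrm{H}}\mathbf{B}^{-1}\mathbf{y}-\mathbf{y}^{\mathrm{H}}\mathbf{A}^{-1}\mathbf{y}$. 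Taking the expectation under $\mathbf{y}\sim\mathcal{CN}(\mathbf{0},\mathbf{A})$ and invoking the quadratic-form identity $\mathbb{E}[\mathbf{y}^{\mathrm{H}}\mathbf{M}\mathbf{y}]=\operatorname{tr}(\mathbf{M}\mathbf{A})$ turns the quadratic part into $\operatorname{tr}(\mathbf{B}^{-1}\mathbf{A})-\operatorname{tr}(\mathbf{I})=\operatorname{tr}(\mathbf{B}^{-1}\mathbf{A})-n'$, and the determinant part into $-\log\det(\mathbf{B}^{-1}\mathbf{A})$, which is exactly \eqref{eq:KL_gaussian_concise}. This step is routine once the moment identity is stated.

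For the mixture formula I would follow Goldberger's matching argument. Writing $\mathcal{D}_{\mathrm{KL}}(p_1\parallel p_0)=\mathbb{E}_{p_1}[\log p_1]-\mathbb{E}_{p_1}[\log p_0]$ and abbreviating the components as $f_\ell=\mathcal{CN}(\mathbf{0},\mathbf{\Sigma}_\ell)$ and $g_k=\mathcal{CN}(\mathbf{0},\mathbf{R}_k)$, I would treat the two expectations asymmetrically. For the self-term I would invoke a well-separated-component assumption: in the region where $f_\ell$ concentrates the $\ell$-th term dominates, so $p_1\approx\gamma_\ell f_\ell$ and $\mathbb{E}_{p_1}[\log p_1]\approx\sum_\ell\gamma_\ell\int f_\ell\log(\gamma_\ell f_\ell)\,\mathrm{d}\mathbf{y}=\sum_\ell\gamma_\ell(\log\gamma_\ell-h(f_\ell))$, where $h(f_\ell)$ is the differential entropy of the component. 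For the cross-term I would use the pointwise bound $\log p_0(\mathbf{y})\ge\log(\alpha_k g_k(\mathbf{y}))$, valid for every $k$, integrate it against $\gamma_\ell f_\ell\ge 0$, and then tighten by picking the component that maximizes $\int f_\ell\log(\alpha_k g_k)$. Because $\int f_\ell\log(\alpha_k g_k)=\log\alpha_k-h(f_\ell)-\mathcal{D}_{\mathrm{KL}}(f_\ell\parallel g_k)$, this maximization is precisely the minimization of $J(k,\ell)$ over $k$, producing the matched index $k^\star(\ell)$.

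Substituting both pieces into the difference, the entropy terms $h(f_\ell)$ cancel, leaving $\sum_\ell\gamma_\ell[\log(\gamma_\ell/\alpha_{k^\star(\ell)})+\mathcal{D}_{\mathrm{KL}}(f_\ell\parallel g_{k^\star(\ell)})]$, which is \eqref{eq16_concise} once the inner divergence is replaced by its closed form from \eqref{eq:KL_gaussian_concise} with $\mathbf{A}=\mathbf{\Sigma}_\ell$ and $\mathbf{B}=\mathbf{R}_{k^\star(\ell)}$. I expect the main obstacle to be justifying the two dominant-component replacements rather than any algebra: they do not combine into a single clean one-sided bound, since replacing $\log p_1$ by $\log(\gamma_\ell f_\ell)$ \emph{underestimates} $\mathbb{E}_{p_1}[\log p_1]$ (pushing the estimate down) while replacing $\log p_0$ by its matched term \emph{underestimates} $\mathbb{E}_{p_1}[\log p_0]$ (pushing it up), so the two errors act in opposite directions. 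The approximation is therefore accurate exactly when the GMD components are well separated, so that near the support of each $f_\ell$ a single term dominates \emph{both} mixtures simultaneously; I would state this separation hypothesis explicitly and rely on it, rather than attempt a uniform error bound.
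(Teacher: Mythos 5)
Your proposal is correct and takes essentially the same route as the paper: the lemma is stated there as Goldberger's matched-component approximation (citing \cite{goldbergerEfficient2003}), and the paper's own justification in Appendix~\ref{app:error_analysis} rests on exactly the replacements you make---substituting the dominant single component for $p_1$ in the numerator and for $p_0$ in the denominator of the log-ratio---with the same two error sources (matching and cross-term) controlled under the same well-separated-components hypothesis you state. Your added derivation of the exact zero-mean complex-Gaussian divergence \eqref{eq:KL_gaussian_concise} via the quadratic-form identity $\mathbb{E}[\mathbf{y}^{\mathrm{H}}\mathbf{M}\mathbf{y}]=\operatorname{tr}(\mathbf{M}\mathbf{A})$ is also correct; the paper states that formula without proof.
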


\begin{remark}
A brief error analysis for the approximations in Lemmas~\ref{lemma2_concise} and~\ref{lemma3_concise} is provided in Appendix~\ref{app:error_analysis}. This analysis establishes that the surrogate formulations are theoretically feasible with controllable approximation errors under some specific assumptions (i.e. low variance for MI, well-separated components for KL).
\end{remark}

Substituting these approximations into $(\mathcal{P}_2)$ yields the final, computationally tractable optimization problem:
\begin{equation}\label{eq17_concise}
	(\mathcal{P}_3): \quad
	\begin{alignedat}{2} 
		&\max_{\mathbf{s}} \quad && F(\mathbf{s}) \triangleq \overline{\mathfrak{D}}(\mathbf{s})+\overline{\mathfrak{E}}(\mathbf{s})\\
		&\text{s.t.}      && |s_n|=c, \quad n = 1, \dots, N.
	\end{alignedat}
\end{equation}

It should be noted that the objective function $F(\mathbf{s})$ defined in $(\mathcal{P}_3)$ is Lipschitz continuity according to the following proposition.

\begin{proposition}\label{prop1}
	Assume that the component covariance matrices $\{\mathbf{R}_k\}_{k=1}^K$ and $\{\mathbf{Q}_m\}_{m=1}^M$ are bounded (i.e., their eigenvalues lie within $[\lambda_{\min}, \lambda_{\max}]$ with $0 < \lambda_{\min} \le \lambda_{\max} < \infty$) and that the mixture weights are strictly positive ($\min\{\alpha_k, \beta_m, \gamma_\ell\} > 0$). Under the constant modulus constraint $|s_n|=c$ for all $n$, the objective function $F(\mathbf{s})$ is Lipschitz continuous on the feasible set $\mathcal{M}$. There exists a constant $L_{\mathrm{F}} < \infty$ such that
	\begin{equation}
		|F(\mathbf{s}_1) - F(\mathbf{s}_2)| \leq L_{\mathrm{F}} \|\mathbf{s}_1 - \mathbf{s}_2\|_2, \quad \forall \mathbf{s}_1, \mathbf{s}_2 \in \mathcal{M}.
		\label{eq:lipschitz}
	\end{equation}
	The Lipschitz constant $L_{\mathrm{F}}$ depends on the bounds $\lambda_{\min}$, $\lambda_{\max}$, the waveform dimension $N$, the minimum mixture weight, the constant $c$ (related to total energy $E_s$), and the dimensions $N_T$, $K$, $M$.
\end{proposition}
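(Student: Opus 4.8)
The plan is to exhibit $F=\overline{\mathfrak{E}}+\overline{\mathfrak{D}}$ as a finite combination of maps that are each Lipschitz on $\calM$, exploiting the fact that the constant modulus constraint makes $\calM$ compact and forces uniform spectral bounds on every $\mathbf{\Sigma}_\ell$. The crucial preliminary observation is that each $s_n$ lies on the circle $|s_n|=c$, so $\|\mathbf{s}\|_2^2=Nc^2=E_s$ is fixed and $\calM$ is a closed, bounded subset of $\mathbb{C}^N$. Since $\mathbf{s}\mapsto\mathbf{S}$ is linear with $\FrobeniusNorm{\mathbf{S}}^2=N_T\|\mathbf{s}\|_2^2$, we obtain the uniform bound $\|\mathbf{S}\|_2\le\sqrt{N_T E_s}$ over all of $\calM$, which will drive everything else.

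First I would establish uniform two-sided eigenvalue bounds on $\mathbf{\Sigma}_\ell=\mathbf{S}\mathbf{Q}_m\mathbf{S}^{\mathrm{H}}+\mathbf{R}_k$. Since $\mathbf{S}\mathbf{Q}_m\mathbf{S}^{\mathrm{H}}\succeq\mathbf{0}$ and $\mathbf{R}_k\succeq\lambda_{\min}\mathbf{I}$, we have $\mathbf{\Sigma}_\ell\succeq\lambda_{\min}\mathbf{I}$; and $\|\mathbf{Q}_m\|_2\le\lambda_{\max}$ together with $\|\mathbf{S}\|_2\le\sqrt{N_T E_s}$ gives $\mathbf{\Sigma}_\ell\preceq\lambda'_{\max}\mathbf{I}$ with $\lambda'_{\max}\triangleq\lambda_{\max}(1+N_T E_s)$. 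These bounds are the workhorse of the whole argument: they confine $\det(\mathbf{\Sigma}_\ell)$, $\det(\mathbf{\Sigma}_\ell)^{-1}$, $\mathbf{\Sigma}_\ell^{-1}$, and $\sum_\ell\gamma_\ell\det(\mathbf{\Sigma}_\ell)^{-1}$ to compact intervals bounded away from $0$ and $\infty$, so that $\log$, $\log\det$, and the matrix inverse are all Lipschitz when restricted to the convex spectral range $[\lambda_{\min},\lambda'_{\max}]$. I would also record that $\mathbf{s}\mapsto\mathbf{\Sigma}_\ell$ is Lipschitz: telescoping $\mathbf{S}_1\mathbf{Q}_m\mathbf{S}_1^{\mathrm{H}}-\mathbf{S}_2\mathbf{Q}_m\mathbf{S}_2^{\mathrm{H}}=\mathbf{S}_1\mathbf{Q}_m(\mathbf{S}_1-\mathbf{S}_2)^{\mathrm{H}}+(\mathbf{S}_1-\mathbf{S}_2)\mathbf{Q}_m\mathbf{S}_2^{\mathrm{H}}$ and using $\|\mathbf{S}_1-\mathbf{S}_2\|_2\le\sqrt{N_T}\|\mathbf{s}_1-\mathbf{s}_2\|_2$ yields $\|\mathbf{\Sigma}_\ell(\mathbf{s}_1)-\mathbf{\Sigma}_\ell(\mathbf{s}_2)\|_2\le 2\lambda_{\max}N_T\sqrt{E_s}\,\|\mathbf{s}_1-\mathbf{s}_2\|_2$.

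With these pieces, the estimation term $\overline{\mathfrak{E}}$ is immediate: its first summand is constant in $\mathbf{s}$, and the second is $\log$ of $g(\mathbf{s})=\sum_\ell\gamma_\ell\det(\mathbf{\Sigma}_\ell)^{-1}$, a composition of the Lipschitz maps $\mathbf{s}\mapsto\mathbf{\Sigma}_\ell\mapsto\det(\mathbf{\Sigma}_\ell)^{-1}$ followed by $\log$, which is Lipschitz because $g$ is bounded below by $(\lambda'_{\max})^{-n'}>0$. The main obstacle is the detection term $\overline{\mathfrak{D}}$, because the index $k^\star(\ell)=\argmin_k J(k,\ell)$ varies with $\mathbf{s}$ and is only piecewise constant, so $\overline{\mathfrak{D}}$ fails to be differentiable across the switching surfaces where the optimal $k$ changes. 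I would sidestep this entirely by writing $\overline{\mathfrak{D}}(\mathbf{s})=\sum_\ell\gamma_\ell\min_{k}J(k,\ell)$ and invoking the standard fact that the pointwise minimum of finitely many Lipschitz functions is Lipschitz with constant the largest of the individual ones. Each $J(k,\ell)=\log(\gamma_\ell/\alpha_k)+\tr(\mathbf{R}_k^{-1}\mathbf{\Sigma}_\ell)-\log\det(\mathbf{R}_k^{-1}\mathbf{\Sigma}_\ell)-n'$ is Lipschitz in $\mathbf{s}$ — the trace term is linear in $\mathbf{\Sigma}_\ell$ with coefficient $\|\mathbf{R}_k^{-1}\|_2\le\lambda_{\min}^{-1}$, and the $\log\det$ term has gradient $\mathbf{\Sigma}_\ell^{-1}$ of norm $\le\lambda_{\min}^{-1}$ — so each $\min_k J(k,\ell)$ inherits a uniform Lipschitz constant.

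Finally I would assemble the constants. Since $F=\overline{\mathfrak{E}}+\overline{\mathfrak{D}}$ is a finite sum of Lipschitz functions, it is Lipschitz with $L_{\mathrm{F}}\le L_{\overline{\mathfrak{E}}}+L_{\overline{\mathfrak{D}}}<\infty$, and tracing the chain of constants through the compositions shows $L_{\mathrm{F}}$ depends only on $\lambda_{\min},\lambda_{\max}$, the dimensions $N,N_T,K,M$, the minimum mixture weight, and $c$ (equivalently $E_s$), exactly as asserted. The only genuinely delicate point is the nonsmoothness of the $\argmin$; everything else is the routine bookkeeping of composing Lipschitz maps on a compact domain, so I would state the min-of-Lipschitz lemma carefully and relegate the explicit constant estimates to a short calculation.
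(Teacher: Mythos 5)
Your proposal is correct, and its overall skeleton matches the paper's: both arguments first show $\mathbf{s}\mapsto\mathbf{\Sigma}_\ell(\mathbf{s})$ is Lipschitz via the same telescoping identity and the bound $\|\mathbf{S}_1-\mathbf{S}_2\|_{\mathrm{F}}=\sqrt{N_T}\|\mathbf{s}_1-\mathbf{s}_2\|_2$ (you even land on the identical constant $L_\Sigma=2N_T\sqrt{E_s}\,\lambda_{\max}$), then treat $\overline{\mathfrak{E}}$ as $\log$ composed with a bounded-below Lipschitz map and $\overline{\mathfrak{D}}$ through the trace and $\log\det$ pieces of $J(k,\ell,\mathbf{s})$. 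Where you genuinely diverge — and improve on the paper — is the detection term. The paper's appendix bounds each $J(k,\ell,\cdot)$ and then asserts that $\overline{\mathfrak{D}}$, being ``constructed as a weighted sum,'' inherits a Lipschitz constant; this silently treats the matching index $k^\star(\ell)$ as fixed, even though $k^\star(\ell)=\argmin_k J(k,\ell,\mathbf{s})$ depends on $\mathbf{s}$ and jumps discontinuously across switching surfaces. You identify exactly this issue and resolve it by rewriting $\overline{\mathfrak{D}}(\mathbf{s})=\sum_\ell\gamma_\ell\min_k J(k,\ell,\mathbf{s})$ and invoking the fact that a pointwise minimum of finitely many Lipschitz functions is Lipschitz with the largest of the individual constants; this is the rigorous way to handle the waveform-dependent matching, and it closes a real gap in the published argument. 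A second, smaller difference: you derive the uniform spectral bounds $\lambda_{\min}\mathbf{I}\preceq\mathbf{\Sigma}_\ell\preceq\lambda_{\max}(1+N_TE_s)\mathbf{I}$ explicitly from $\|\mathbf{S}\|_2\le\sqrt{N_TE_s}$, whereas the paper simply posits eigenvalue bounds $[\lambda'_{\min},\lambda'_{\max}]$ without exhibiting them, so your version also makes the dependence of $L_{\mathrm{F}}$ on $c$, $N$, and $N_T$ traceable rather than assumed.
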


\begin{proof} 
 See \ref{sec:lipschitz_proof} for details.
\end{proof}

The objective function $F(\mathbf{s})$ involves computable matrix operations dependent on $\mathbf{s}$ via $\mathbf{\Sigma}_\ell$ \eqref{eq:Sigma_ell}. However, the non-convexity of both $F(\mathbf{s})$ and the constant modulus feasible set necessitates the specialized optimization approach detailed in Section~\ref{sec3}.

\section{Proposed Phase-Coded Dream Optimization Algorithm}
\label{sec3} 
The optimization problem $(\mathcal{P}_3)$ in \eqref{eq17_concise} presents considerable challenges owing to the non-convex nature of the objective function $F(\mathbf{s})$, compounded by the non-convex geometry of the constant modulus feasible set $\mathcal{M}$. Specifically, the presence of logarithmic determinant terms in $\mathfrak{D}(\mathbf{s})$ and $\mathfrak{E}(\mathbf{s})$ makes the landscape highly complex. Conventional gradient-based optimization techniques are often ill-suited for such landscapes, as gradients may be difficult or computationally expensive to obtain and are susceptible to convergence towards potentially numerous suboptimal local extrema. To surmount these obstacles, this section introduces a tailored metaheuristic approach, the Phase-Coded Dream Optimization Algorithm (PC-DOA), specifically architected for navigating the complexities inherent in constant modulus radar waveform design based on the MIUB criterion. PC-DOA eschews reliance on gradient information, leveraging instead a sophisticated population-based search strategy enhanced with mechanisms attuned to the problem structure, including a domain-specific hybrid initialization and adaptive phase updates.

\subsection{Phase-Coded Representation}
\label{subsec:representation}
A cornerstone of the proposed methodology is the direct enforcement of the constant modulus constraint $|s_n| = c$ by reformulating the optimization problem on the phase space. The feasible set constitutes the complex circle manifold:
\begin{equation}
	\mathcal{M} = \left\{\mathbf{s} \in \mathbb{C}^N : |s_n| = c,~\forall n=1, \dots, N \right\}.
	\label{eq:manifold_def}
\end{equation}

Recalling $c = \sqrt{E_s/N}$, any waveform $\mathbf{s} \in \mathcal{M}$ can be uniquely parameterized by its phase vector $\bm{\theta} = [\theta_1, \dots, \theta_N]^{\mathrm{T}} \in [-\pi, \pi)^N$ (or any interval of length $2\pi$):
\begin{equation}
	\mathbf{s}(\bm{\theta}) = c \cdot [e^{j\theta_1}, \dots, e^{j\theta_N}]^{\mathrm{T}}.
	\label{eq:phase_parameterization}
\end{equation}

This parameterization inherently satisfies the constant modulus requirement. Optimization is thus performed over the phase vector $\bm{\theta}$. To ensure phase values generated during updates remain within the principal interval $[-\pi, \pi)$, we employ a periodic wrapping operator $\Psi: \mathbb{R}^N \to [-\pi, \pi)^N$, defined element-wise as:
\begin{equation}
	[\Psi(\bm{\theta})]_n = \Psiop(\theta_n + \pi, 2\pi) - \pi,
	\label{eq:phase_wrap_op} 
\end{equation}
where $\text{mod}(a, b)$ denotes the modulo operation yielding a result in $[0, b)$. Any phase vector resulting from algorithmic updates is passed through $\Psi(\cdot)$. This strategy integrates the constraint into the search space representation, facilitating navigation of the solution manifold without explicit projection steps.
\subsection{Hybrid Initialization Strategy}
\label{subsec:initialization}
The efficacy of population-based metaheuristics like PC-DOA is profoundly influenced by the diversity and quality of the initial population. To accelerate convergence and mitigate premature stagnation, we propose a hybrid initialization strategy. Let the total population size be $N_p$. The procedure generates $N_p$ initial phase vectors $\{\bm{\theta}_i^{(0)}\}_{i=1}^{N_p}$ using three complementary mechanisms:

\subsubsection{LFM-Inspired Initialization}

Capitalizing on the favorable properties of Linear Frequency Modulation (LFM) waveforms, a fraction $N_{\text{LFM}} = \lfloor \eta N_p \rfloor$ of the population is initialized with phase structures reminiscent of LFM signals, where $\eta \in (0,1)$ is a control parameter. The $n$-th phase component for the $i$-th such individual ($i=1, \dots, N_{\text{LFM}}$) is:
\begin{equation}
	[\bm{\theta}_{\text{LFM},i}^{(0)}]_n = \Psi\left( \beta_i \pi \frac{(n-1)^2}{N-1} + \delta_{i,n} \right), \quad n=1,\dots,N,
	\label{eq:lfm_init_revised} 
\end{equation}
where $\beta_i$ is a randomly sampled chirp rate parameter (e.g., $\beta_i \sim \mathcal{U}(-1, 1)$) and $\delta_{i,n} \sim \mathcal{U}(-\Delta, \Delta)$ is a small random phase perturbation with $\Delta \ll \pi$. This seeds the population with potentially promising structures. $\mathcal{U}(a, b)$ denotes the continuous uniform distribution over $[a, b)$.

\subsubsection{Chaotic Initialization}

To leverage the ergodicity of chaotic systems for broad exploration, $N_{\text{chaos}} = \lfloor (N_p - N_{\text{LFM}})/2 \rfloor$ individuals are initialized using the Logistic map. For each such individual $i$, an initial vector $\mathbf{c}^{(0)} \in (0, 1)^N$ is randomly generated (excluding fixed points). It is iterated $K_{\text{chaos}}$ times via:
\begin{equation}
	c_n^{(k)} = 4 c_n^{(k-1)} (1 - c_n^{(k-1)}),
	\label{eq:chaos_map_revised} 
\end{equation}
where $n=1,\dots,N$, $k=1,\dots,K_{\text{chaos}}$.

The resulting vector $\mathbf{c}^{(K_{\text{chaos}})}$ maps to the phase domain:
\begin{equation}
	[\bm{\theta}_{\text{chaos},i}^{(0)}]_n = \Psi\left( -\pi + 2\pi c_n^{(K_{\text{chaos}})} \right).
	\label{eq:chaos_init_revised} 
\end{equation}

This promotes a thorough initial canvassing of the phase space $\mathbb{T}^N \cong [-\pi, \pi)^N$.

\subsubsection{Random Initialization}

To ensure unbiased coverage, the remaining $N_{\text{rand}} = N_p - N_{\text{LFM}} - N_{\text{chaos}}$ individuals are initialized purely randomly:
\begin{equation}
	[\bm{\theta}_{\text{rand},i}^{(0)}]_n \sim \mathcal{U}(-\pi, \pi), \quad n=1,\dots,N.
	\label{eq:rand_init_revised} 
\end{equation}

This enhances the exploratory capability of the initial population.
This tripartite strategy provides a robust starting point, balancing exploitation of known structures with diverse exploration.

\subsection{Exploration and Exploitation}
\label{subsec:optimization}

The core of PC-DOA is its iterative refinement process, balancing exploration and exploitation. Let $\mathcal{P}_t = \{\bm{\theta}_1^{(t)}, \dots, \bm{\theta}_{N_p}^{(t)}\}$ be the population at iteration $t$. The fitness of each candidate $\bm{\theta}_i^{(t)}$ is evaluated using the objective function from $\mathcal{P}_3$:
\begin{equation}
	f(\bm{\theta}_i^{(t)}) = F(\mathbf{s}(\bm{\theta}_i^{(t)})) = \overline{\mathfrak{D}}(\mathbf{s}(\bm{\theta}_i^{(t)})) + \overline{\mathfrak{E}}(\mathbf{s}(\bm{\theta}_i^{(t)})),
	\label{eq:fitness_obj_revised} 
\end{equation}
with $\overline{\mathfrak{D}}(\cdot)$ and $\overline{\mathfrak{E}}(\cdot)$ given by \eqref{eq16_concise} and \eqref{eq15_concise}, respectively, and $\mathbf{s}(\bm{\theta})$ by \eqref{eq:phase_parameterization}. The optimization proceeds in two phases, controlled by the iteration count $t$ relative to the maximum iterations $T$.
\subsubsection{Exploration Phase ($t \leq \alpha T$)}
For early iterations ($t \leq \alpha T$, e.g., $\alpha=0.9$), the focus is on broad exploration. The population $\mathcal{P}_{t-1}$ is partitioned into $G$ subgroups $\mathcal{G}_1, \dots, \mathcal{G}_G$. Within each subgroup $\mathcal{G}_g$, the local elite (best individual) is identified:
\begin{equation}
	\bm{\theta}_{g}^{\text{best}} = \argmax_{\bm{\theta} \in \mathcal{G}_g} f(\bm{\theta}).
	\label{eq:local_elite}
\end{equation}

Individuals are updated incorporating stochastic perturbations. For an individual $\bm{\theta}_{\text{old}} \in \mathcal{G}_g$, a subset $\mathcal{I}$ of $k$ dimensions is randomly selected, where $k \sim \mathcal{U}_{\text{int}}([\lceil N/(8G)\rceil, \lceil N/(3G)\rceil])$. ($\mathcal{U}_{\text{int}}$ denotes discrete uniform distribution). The update for $n \in \mathcal{I}$ is:
\begin{equation}
	[\bm{\theta}_{\text{new}}]_n = \Psi\Bigl([\bm{\theta}_{\text{old}}]_n + \zeta(t) \cdot \delta_n \Bigr), \quad n \in \mathcal{I},
	\label{eq:explore_update_revised} 
\end{equation}
where $\delta_n \sim \mathcal{U}(-\pi, \pi)$ is a random perturbation, and $\zeta(t)$ is a decaying scale factor allowing larger steps early on, e.g., via cosine annealing:
\begin{equation}
	\zeta(t) = \frac{1}{2}\left(\cos\left(\frac{\pi t}{T}\right) + 1\right).
	\label{eq:zeta_decay_revised} 
\end{equation}

For $n \notin \mathcal{I}$, $[\bm{\theta}_{\text{new}}]_n = [\bm{\theta}_{\text{old}}]_n$. This subgroup-based exploration maintains diversity.
\subsubsection{Exploitation Phase ($t > \alpha T$)}
As optimization progresses ($t > \alpha T$), focus shifts to local refinement. The global best solution found so far, $\bm{\theta}^{\text{global}}$, guides the search. Updates involve smaller perturbations. For an individual $\bm{\theta}_{\text{old}}$, a subset $\mathcal{I}$ of $k$ dimensions is randomly selected, with $k \leq k_{\max} = \max(2, \lceil N/3\rceil)$. The update for $n \in \mathcal{I}$ is:
\begin{equation}
	[\bm{\theta}_{\text{new}}]_n = \Psi\Bigl([\bm{\theta}^{\text{global}}]_n + \eta(t) \cdot \delta_n \Bigr), \quad n \in \mathcal{I},
	\label{eq:exploit_update_revised} 
\end{equation}
where $\delta_n \sim \mathcal{U}(-\pi, \pi)$ and $\eta(t)$ is an adaptive step-size factor, typically smaller than $\zeta(t)$ in later stages. Following the provided text structure, we set:
\begin{equation}
	\eta(t) = \frac{1}{2}\left(\cos\left(\frac{\pi t}{T}\right) + 1\right).
	\label{eq:eta_decay_revised} 
\end{equation}

For $n \notin \mathcal{I}$, $[\bm{\theta}_{\text{new}}]_n = [\bm{\theta}_{\text{old}}]_n$. This promotes convergence towards high-quality solutions. 
\subsubsection{Implementation Summary}
The overall PC-DOA iterates these phases, potentially incorporating elite preservation (carrying forward the best individuals). The procedure is outlined in Algorithm~\ref{alg:pc_doa_revised}.
\begin{algorithm}[!t]
	\caption{Phase-Coded Dream Optimization Algorithm (PC-DOA)}
	\label{alg:pc_doa_revised} 
	\begin{algorithmic}[1]
		\REQUIRE Waveform dimension $N$, population size $N_p$, max iterations $T$, constant magnitude $c$, initialization proportion $\eta$, phase transition parameter $\alpha$, number of groups $G$.
		\ENSURE Optimal phase vector $\bm{\theta}^*$ and corresponding waveform $\mathbf{s}^*$.
		
		\STATE Initialize population $\mathcal{P}_0 = \{\bm{\theta}_1^{(0)}, \dots, \bm{\theta}_{N_p}^{(0)}\}$ using hybrid strategy \eqref{eq:lfm_init_revised}-\eqref{eq:rand_init_revised}.
		\STATE Evaluate $f(\bm{\theta}_i^{(0)})$ for all $i \in \{1, \dots, N_p\}$ using \eqref{eq:fitness_obj_revised}.
		\STATE Set $\bm{\theta}^{\text{global}} = \argmax_{i} f(\bm{\theta}_i^{(0)})$ and $f^{\text{global}} = f(\bm{\theta}^{\text{global}})$.
		
		\FOR{$t = 1$ to $T$}
		\STATE $\mathcal{P}_{\text{new}} = \emptyset$.
		\IF{$t \leq \alpha T$}
		\STATE Partition $\mathcal{P}_{t-1}$ into $G$ subgroups $\mathcal{G}_1, \dots, \mathcal{G}_G$.
		\FOR{each subgroup $\mathcal{G}_g$}
		\STATE Identify local elite $\bm{\theta}_{g}^{\text{best}}$ using \eqref{eq:local_elite}.
		\FOR{each $\bm{\theta}_{\text{old}} \in \mathcal{G}_g$}
		\STATE Generate $\bm{\theta}_{\text{new}}$ using exploration update \eqref{eq:explore_update_revised} (potentially guided by $\bm{\theta}_{g}^{\text{best}}$ or other members, details depend on exact DOA variant). 
		\STATE $\mathcal{P}_{\text{new}} = \mathcal{P}_{\text{new}} \cup \{\bm{\theta}_{\text{new}}\}$.
		\ENDFOR
		\ENDFOR
		\ELSE
		\FOR{each $\bm{\theta}_{\text{old}} \in \mathcal{P}_{t-1}$}
		\STATE Generate $\bm{\theta}_{\text{new}}$ using exploitation update \eqref{eq:exploit_update_revised} biased by $\bm{\theta}^{\text{global}}$.
		\STATE $\mathcal{P}_{\text{new}} = \mathcal{P}_{\text{new}} \cup \{\bm{\theta}_{\text{new}}\}$.
		\ENDFOR
		\ENDIF
		
		\STATE Evaluate $f(\bm{\theta})$ for all $\bm{\theta} \in \mathcal{P}_{\text{new}}$.
		\STATE Apply selection/replacement to form $\mathcal{P}_t$ from $\mathcal{P}_{t-1}$ and $\mathcal{P}_{\text{new}}$ (e.g., greedy selection, potential elitism).
		\STATE Update $\bm{\theta}^{\text{global}}$ and $f^{\text{global}}$ if a better solution is found in $\mathcal{P}_t$.
		\ENDFOR
		
		\STATE Set $\bm{\theta}^* = \bm{\theta}^{\text{global}}$.
		\STATE Compute $\mathbf{s}^* = c \cdot [e^{j\theta^*_1}, \dots, e^{j\theta^*_N}]^{\mathrm{T}}$.
		\RETURN $\bm{\theta}^*$, $\mathbf{s}^*$.
	\end{algorithmic}
\end{algorithm}
\subsection{Computational Complexity and Convergence Analysis}
\label{subsec:complexity_convergence}
\subsubsection{Computational Complexity}
The computational burden per iteration of PC-DOA is dominated by the fitness evaluation step \eqref{eq:fitness_obj_revised}. Evaluating $F(\mathbf{s}(\bm{\theta}))$ involves calculating $\overline{\mathfrak{D}}$ and $\overline{\mathfrak{E}}$, which require operations on covariance matrices of size $n' \times n'$, where $n' = N + N_T - 1$. Computing the KL divergence and MI approximations \eqref{eq16_concise}, \eqref{eq15_concise} involves matrix determinants, inverses, and traces, scaling roughly as $O(L \cdot (n')^3)$ per candidate, where $L=MK$. Evaluating the fitness for the population $N_p$ times incurs a cost of approximately $O(N_p L (n')^3)$ per iteration. The particle update steps typically have lower complexity (e.g., $O(N_p N)$). Thus, the overall complexity for $T$ iterations is approximately $O(T N_p L (n')^3)$.
\subsubsection{Convergence Properties}

The Lipschitz continuity established in Proposition~\ref{prop1} is instrumental, ensuring a degree of smoothness or regularity in the objective function landscape, which is often a prerequisite for the convergence analysis of optimization algorithms.

Regarding the PC-DOA itself, while a formal proof of global convergence for general metaheuristics on non-convex manifolds is intricate, the algorithm incorporates several features conducive to finding high-quality solutions. The stochastic nature of the updates, particularly during the exploration phase, combined with the mechanisms for maintaining population diversity, allows the algorithm to traverse the search space effectively. If the stochastic update rules ensure sufficient exploration such that any region of the feasible manifold $\mathcal{M}$ can be reached from any other state, and combined with appropriate management of step sizes, the algorithm's trajectory can be analyzed within frameworks related to stochastic approximation or global random search methods \cite{spall2005introduction}. Under such ideal conditions, supported by the regularity guaranteed by Proposition~\ref{prop1}, the algorithm can be expected to converge almost surely to the global optimum. Let $F^* = \sup_{\mathbf{s} \in \mathcal{M}} F(\mathbf{s})$ be the maximum value of the objective function. Then, convergence is characterized by:
\begin{equation}
	\Pr\left( \lim_{t\to \infty} F(\mathbf{s}(\bm{\theta}^{(t)})) = F^{*}\right) = 1,
	\label{eq:almost_sure_conv}
\end{equation}
where $\bm{\theta}^{(t)}$ represents the phase vector of the best solution found up to iteration $t$.

\section{Numerical Results}
\label{sec4}
This section presents numerical simulations designed to validate the effectiveness of the proposed Phase-Coded Dream Optimization Algorithm (PC-DOA) and the associated Mutual Information Upper Bound (MIUB) objective function $F(\mathbf{s})$ for constant modulus waveform design.

\subsection{Simulation Configuration} 
\label{sec4.1}
We consider a baseband transmit waveform $\mathbf{s} \in \mathbb{C}^N$ with length $N=64$ and normalized total energy $E_s = 1$ (i.e. $c = 1/\sqrt{N}$). The TIR $\mathbf{x}$ and the clutter process $\mathbf{w}$ are modeled as zero-mean random vectors drawn from GMDs. A foundational assumption, common in cognitive radar waveform design literature (e.g., \cite{bell_information_1993, romeroTheoryApplicationSNR2011}), is that sufficiently accurate statistical models (here, the GMD parameters: weights $\alpha_k, \beta_m$ and covariance matrices $\mathbf{Q}_m, \mathbf{R}_k$) can be obtained \textit{prior} to waveform design, perhaps through prior sensing or intelligence data. Their covariance matrices, $\mathbf{Q} = \sum_{m=1}^M \beta_m \mathbf{Q}_m$ and $\mathbf{R} = \sum_{k=1}^K \alpha_k \mathbf{R}_k$, respectively, are determined via the inverse Fourier transform of predefined power spectral densities (PSDs). Both target and clutter are modeled with $K=M=2$ components. The target PSD components possess mixture weights $\boldsymbol{\beta} = [0.8, 0.2]^T$. Conversely, the clutter PSD components have weights $\boldsymbol{\alpha} = [0.2, 0.8]^T$. These spectral distributions, illustrated in Fig.~\ref{fig:psd_example}, were generated using the methodology presented in \cite{yang_cognitive_2023}. The Signal-to-Clutter Ratio (SCR) is varied parametrically from -5 dB to 5 dB, defined based on the total power associated with target and clutter components:
\begin{equation}
	\mathrm{SCR} = 10\log_{10}\left( \frac{\sum_{m=1}^2 \beta_m \operatorname{Tr}(\mathbf{Q}_m)}{\sum_{k=1}^2 \alpha_k \operatorname{Tr}(\mathbf{R}_k)} \right),
	\label{eq:scr_definition}
\end{equation}
assuming component covariance matrices $\mathbf{Q}_m, \mathbf{R}_k$ capture the spectral power distribution.

The proposed PC-DOA (Algorithm~\ref{alg:pc_doa_revised}) is parameterized with a population size $N_p = 200$, a maximum iteration count $T = 2000$, and an LFM-inspired initialization proportion $\eta = 0.3$. The adaptive step-size factors $\zeta(t)$ and $\eta(t)$ follow the cosine annealing schedules defined in \eqref{eq:zeta_decay_revised} and \eqref{eq:eta_decay_revised}, respectively.
Performance is benchmarked against three strategies:
\begin{enumerate}
	\item \textit{MI Maximization:} Optimization maximizing the mutual information (or a related MI metric $f_{\text{MI}}(\mathbf{s})$), subject to the constant modulus constraint ($\mathbf{s} \in \mathcal{M}$), solved using PC-DOA.
	\item \textit{Weighted SCR-MI (WSM):} Optimization maximizing a weighted sum objective $J_{\text{WSM}}(\mathbf{s}) = w \cdot f_{\text{SCR}}(\mathbf{s}) + (1-w) \cdot f_{\text{MI}}(\mathbf{s})$ with weight $w=0.5$. Here, $f_{\text{SCR}}(\mathbf{s})$ and $f_{\text{MI}}(\mathbf{s})$ represent suitable metrics quantifying the Signal-to-Clutter Ratio and Mutual Information achievable with waveform $\mathbf{s}$, respectively. The optimization is performed directly over the phase variables subject to the constant modulus constraint ($\mathbf{s} \in \mathcal{M}$), using an algorithm such as PC-DOA.
	\item \textit{Random Phase Coding (RPC):} Waveforms $\mathbf{s}$ with phase components $\theta_n$ being drawn independently and uniformly from $[-\pi, \pi)$.
\end{enumerate}

The proposed method optimizes the MIUB objective $F(\mathbf{s})$ in \eqref{eq17_concise} using PC-DOA.
Detection performance is evaluated using Monte Carlo simulations, estimating the detection probability ($P_d$) versus the false alarm probability ($P_{fa}$) by generating Receiver Operating Characteristic (ROC) curves. The number of trials is set to $10^5/P_{fa}$ to ensure statistical significance. Estimation accuracy is assessed by the Mean Squared Error (MSE) of the TIR reconstruction:
\begin{equation}
	\mathrm{MSE} = \mathbb{E}\left[\|\hat{\mathbf{x}} - \mathbf{x}\|_2^2\right],
	\label{eq:mse_definition}
\end{equation}
where $\hat{\mathbf{x}}$ denotes the estimated TIR, typically obtained via a matched filtering or Minimum MSE (MMSE) process dependent on the assumed signal model at the receiver.
\begin{figure}[!t] 
	\centering
	\includegraphics[width=\linewidth]{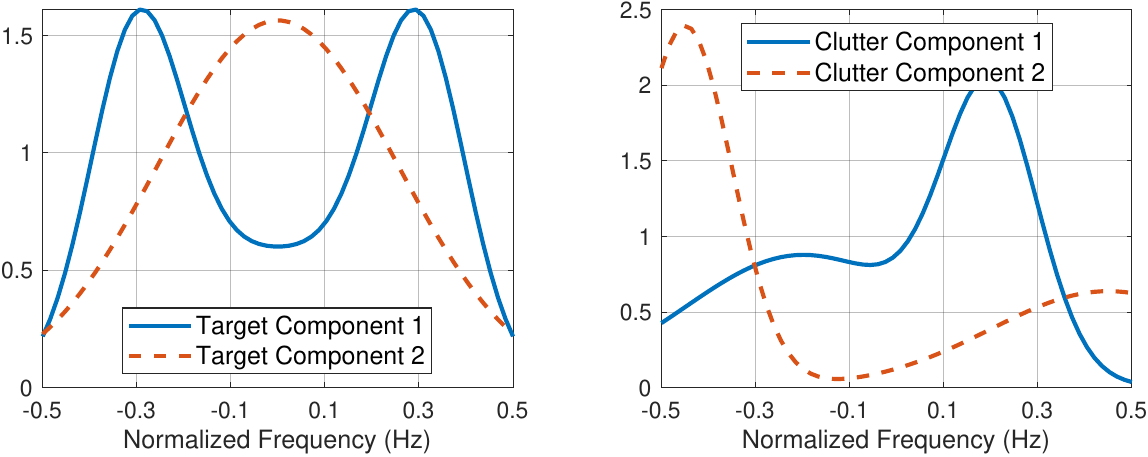} 
	\caption{Illustrative Power Spectral Densities (PSDs) of the Gaussian components comprising the target  and clutter models.} 
	\label{fig:psd_example} 
\end{figure}
\subsection{Performance Analysis}
\label{sec4.2}
\subsubsection{Comparison of Optimization Objectives}
Fig.~\ref{fig:pd_vs_scr} and \ref{fig:roc_curves} compare the detection performance achieved by waveforms optimized under different criteria. Both the proposed MIUB-based and MI-based waveforms demonstrate markedly superior detection capabilities compared to the WSM and RPC approaches across the considered SCR range and for various $P_{fa}$ levels. Notably, optimizing the MIUB objective $F(\mathbf{s})$ yields marginally better detection performance than directly optimizing the MI under these simulation conditions. This observation is complemented by the estimation performance shown in Fig.~\ref{fig:mse_vs_scr}. Here, the MIUB-optimized waveform exhibits a slight increase in MSE compared to the MI-optimized waveform, indicating the expected trade-off between detection and estimation fidelity inherent in waveform design. The close performance between MIUB and MI suggests that, within the constant modulus constraint space, optimizing the upper bound aligns well with optimizing the mutual information itself for estimation purposes.
\begin{figure}[!t]
	\centering
	\includegraphics[width=0.85\linewidth]{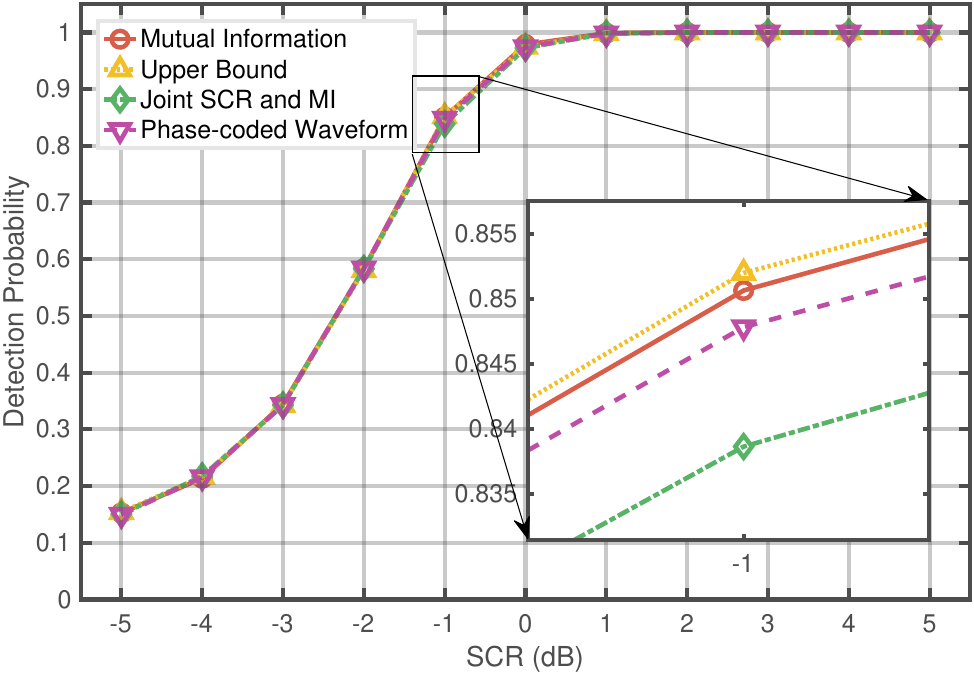}
	\caption{Detection probability ($P_d$) versus Signal-to-Clutter Ratio (SCR) for waveforms optimized using different criteria ($P_{fa} = 10^{-3}$).}
	\label{fig:pd_vs_scr}
\end{figure}
\begin{figure}[!t]
	\centering
	\includegraphics[width=0.85\linewidth]{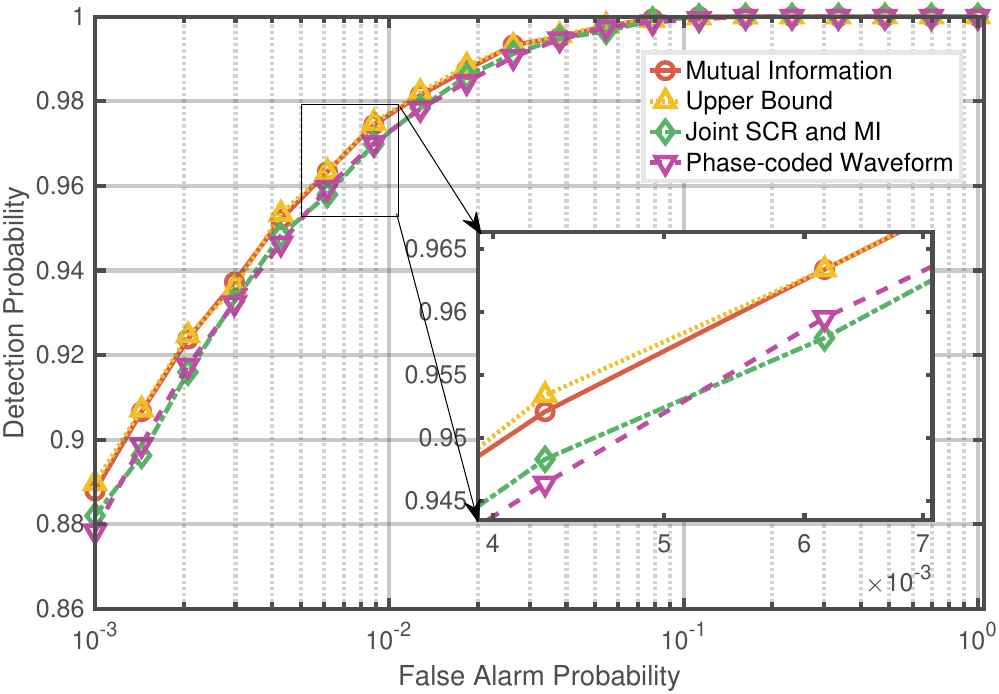}
	\caption{Receiver Operating Characteristic (ROC) curves ($P_d$ vs. $P_{fa}$) for waveforms optimized using different criteria (SCR = 0 dB).}
	\label{fig:roc_curves}
\end{figure}
\begin{figure}[!t]
	\centering
	\includegraphics[width=0.85\linewidth]{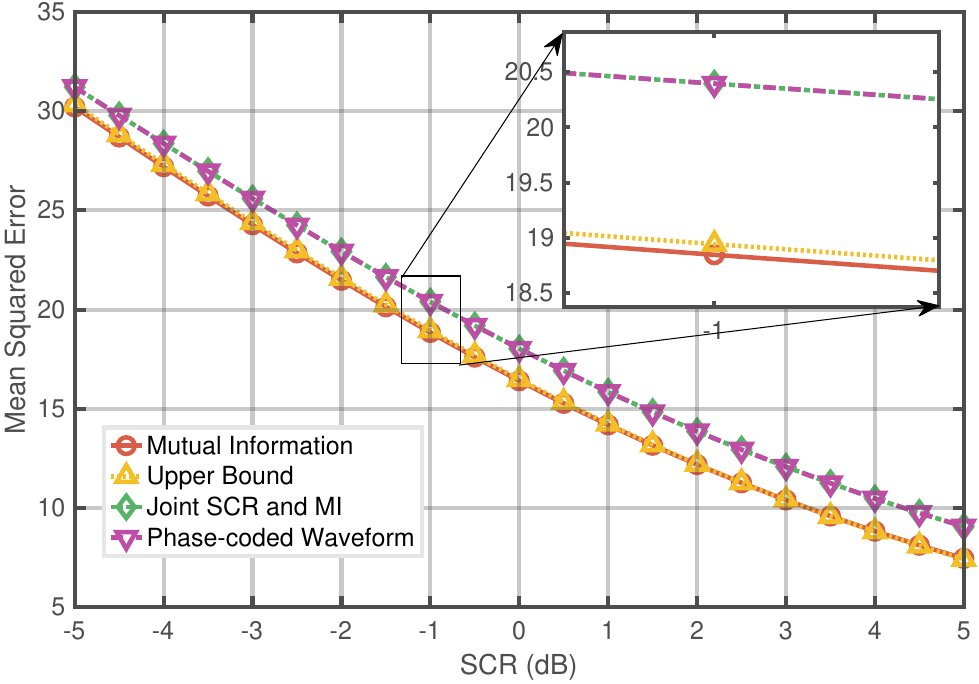}
	\caption{Mean Squared Error (MSE) of Target Impulse Response (TIR) estimation versus SCR for waveforms optimized using different criteria.}
	\label{fig:mse_vs_scr}
\end{figure}
\subsubsection{Comparison of Optimization Algorithms}
We next compare the proposed PC-DOA with a standard Particle Swarm Optimization (PSO) algorithm, both applied to optimize the MIUB objective $F(\mathbf{s})$. Fig.~\ref{fig:roc_doa_pso} and \ref{fig:mse_doa_pso} illustrate the detection and estimation performance. Both PC-DOA and PSO significantly outperform RPC, particularly for SCR values between -2 dB and 4 dB. The ROC curves indicate that PC-DOA achieves enhanced detection, especially under stringent false alarm constraints ($P_{fa} < 10^{-4}$). Regarding estimation, Fig.~\ref{fig:mse_doa_pso} shows that both metaheuristic approaches yield lower MSE compared to RPC, with PC-DOA demonstrating a slight advantage, potentially attributable to its tailored exploration mechanisms on the phase manifold.
\begin{figure}[!t]
	\centering
	\includegraphics[width=0.85\linewidth]{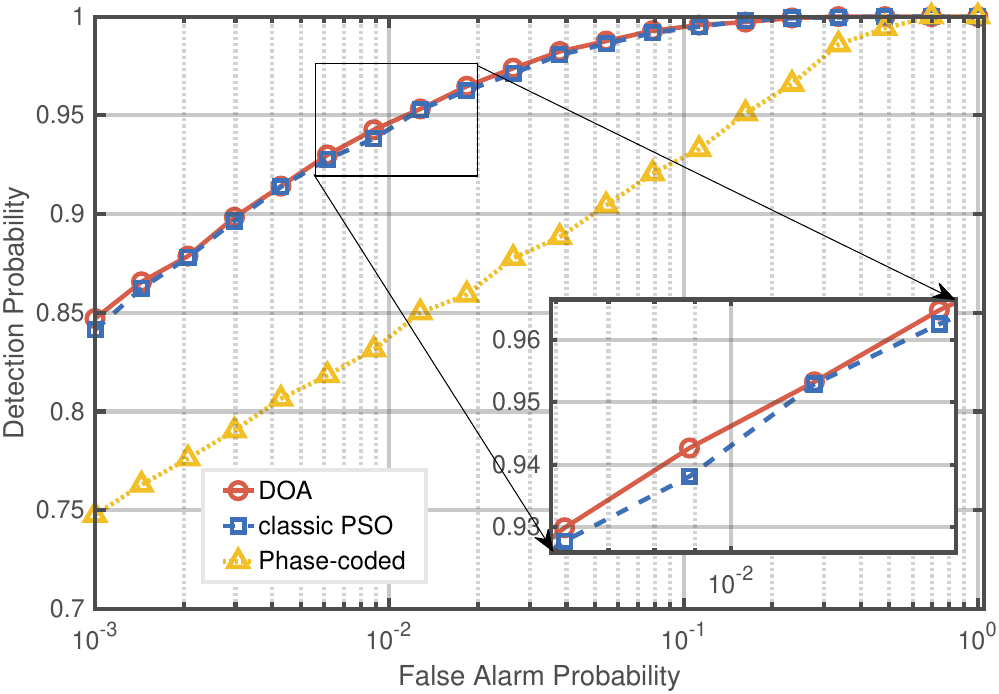}
	\caption{ROC curves comparing PC-DOA and PSO optimization algorithms applied to the MIUB objective (SCR = 0 dB). RPC included for reference.}
	\label{fig:roc_doa_pso}
\end{figure}
\begin{figure}[!t]
	\centering
	\includegraphics[width=0.85\linewidth]{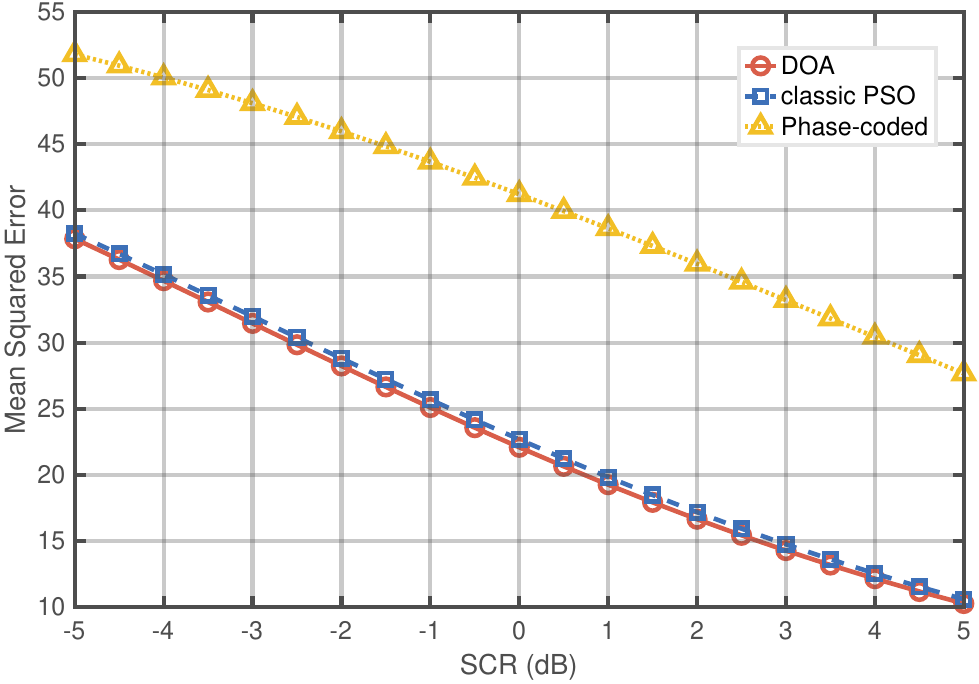}
	\caption{MSE of TIR estimation versus SCR comparing PC-DOA and PSO algorithms applied to the MIUB objective. RPC included for reference.}
	\label{fig:mse_doa_pso}
\end{figure}
Fig.~\ref{fig:convergence} depicts the convergence behavior of PC-DOA and PSO, plotting the best fitness value $F(\mathbf{s})$ found versus the iteration number. The hybrid initialization strategy employed by PC-DOA provides a superior starting fitness compared to standard random initialization. While both algorithms exhibit convergence, PC-DOA demonstrates consistent improvement throughout the iterations, indicative of the effectiveness of its bi-phase exploration and exploitation strategy.
\begin{figure}[!t]
	\centering
	\includegraphics[width=0.85\linewidth]{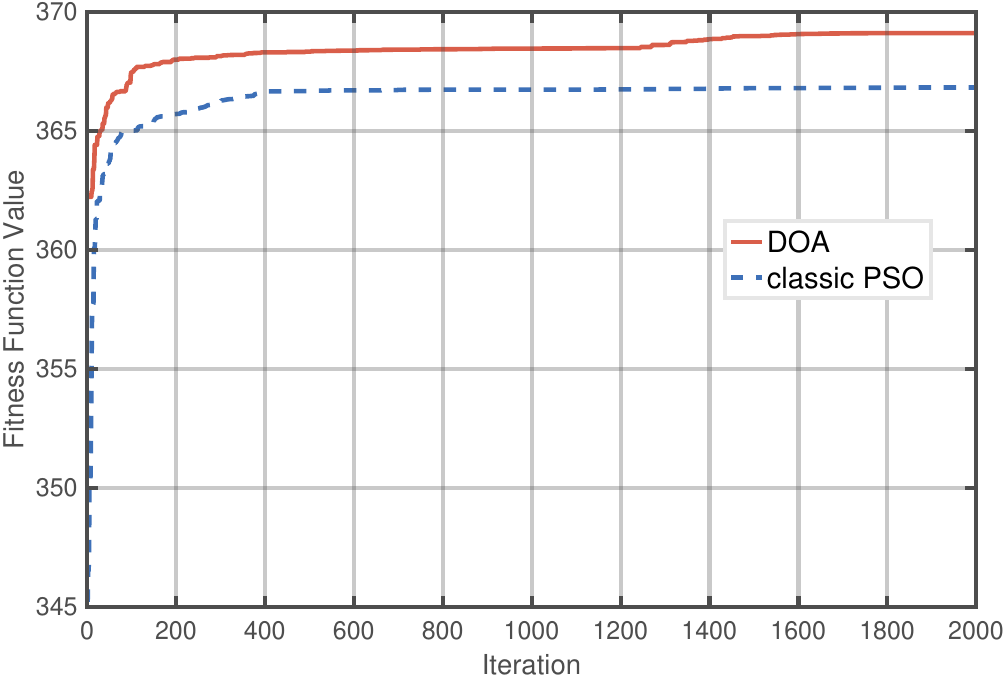}
	\caption{Convergence curves showing the evolution of the best fitness value ($F(\mathbf{s})$) versus iteration count for PC-DOA and PSO.}
	\label{fig:convergence}
\end{figure}
\subsubsection{Optimized Waveform Characteristics}
Finally, we examine the properties of a representative waveform optimized using PC-DOA applied to the MIUB objective. Fig.~\ref{fig:autocorr} displays the magnitude of its autocorrelation function. The waveform achieves effective sidelobe suppression, with a measured Peak Sidelobe Level (PSL) of -22.47 dB relative to the mainlobe peak. Low autocorrelation sidelobes are crucial for resolving closely spaced targets and mitigating self-clutter. Fig.~\ref{fig:ambiguity} shows the corresponding ambiguity function magnitude. The optimized waveform retains a desirable "knife-edge" ridge structure, characteristic of LFM-like signals, suggesting good range resolution and moderate Doppler tolerance, inherited partly from the LFM-inspired initialization and refined during optimization.
\begin{figure}[!t]
	\centering
	\includegraphics[width=0.85\linewidth]{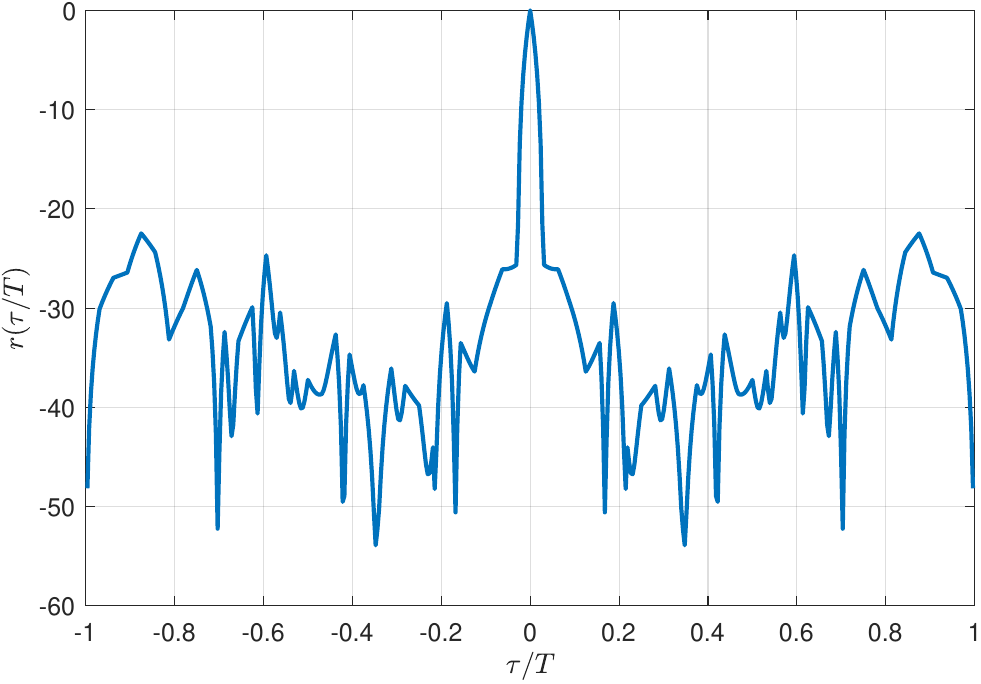}
	\caption{Magnitude of the autocorrelation function for a representative waveform optimized using PC-DOA with the MIUB objective.}
	\label{fig:autocorr}
\end{figure}
\begin{figure}[!t]
	\centering
	\includegraphics[width=0.85\linewidth]{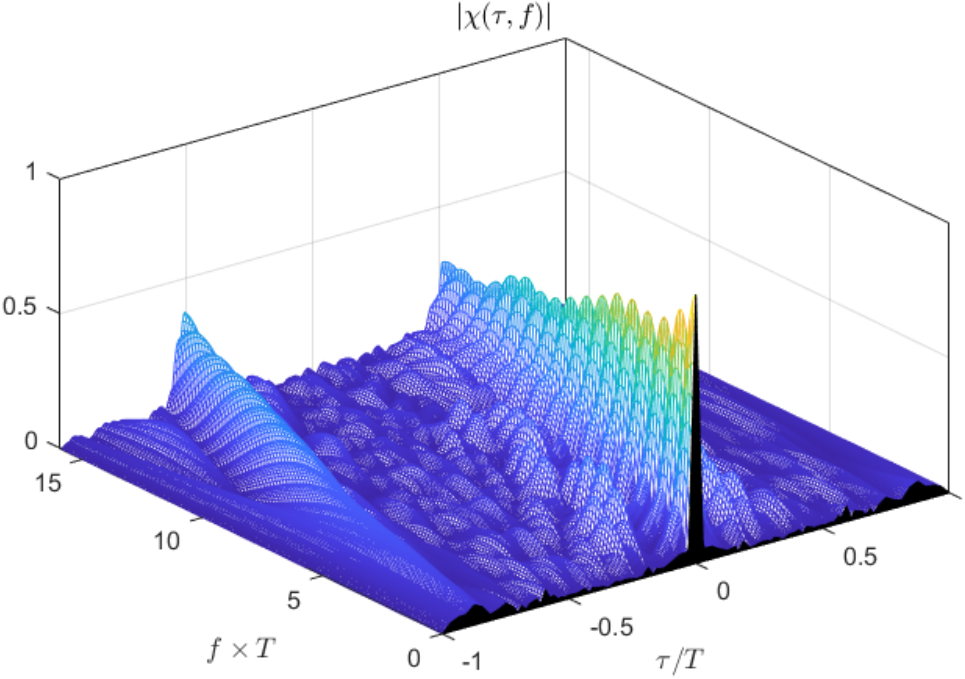}
	\caption{Magnitude of the ambiguity function for a representative waveform optimized using PC-DOA with the MIUB objective.}
	\label{fig:ambiguity}
\end{figure}

\subsection{Discussion}
\label{sec4.3}
The numerical results validate that the proposed approach can effectively balance the inherent detection–estimation trade-offs. Specifically, by optimizing the MIUB objective $F(\mathbf{s})$, the tailored PC-DOA algorithm synthesizes constant modulus waveforms with superior detection performance compared to direct mutual information maximization, though with a slight reduction in estimation accuracy. This behavior is in line with our theoretical predictions.

Notably, the PC-DOA algorithm demonstrates faster convergence and better final solution quality than traditional methods such as PSO. Its hybrid initialization and adaptive exploration–exploitation scheme on the complex circle manifold are particularly effective. Furthermore, the resulting waveforms exhibit excellent autocorrelation and ambiguity function characteristics, making them well-suited for practical radar systems that require both clutter suppression and Doppler tolerance.

\textbf{Limitations and Future Work:} The reliance on Gaussian Mixture Distributions (GMDs) for modeling target and clutter statistics, while justified by prior studies \cite{blacknell2000target, chen_joint_2023, gu2019information} and enabling analytical tractability for the MIUB approximations, represents a modeling assumption. Although GMDs offer considerable flexibility, scenarios involving extreme non-Gaussian interference or highly specific target characteristics might not be perfectly captured. However, the simulation results across various SCRs suggest robustness, implying moderate deviations from the strict GMD assumption may have limited impact. Future work could explore robust formulations or alternative statistical models (e.g., using normalizing flows or variational autoencoders) compatible with information-theoretic objectives. The current simulations also adopted standard simplifications (e.g., LTI channel, specific assumptions on frequency selectivity implicit in the PSDs). Extending the analysis and validation to more complex, time-varying, or spatially diverse (MIMO) channels is another direction for future research. Finally, the computational complexity, while analyzed in Section III.D, scales with the number of GMD components ($L$) and waveform length ($N$), potentially limiting applicability in real-time scenarios with very high-dimensional models; investigating computational speedups or alternative optimization approaches for large-scale problems remains relevant.


\section{Conclusion} \label{sec5}

In this paper, we proposed a unified information-theoretic framework for radar waveform design under constant modulus constraints, simultaneously addressing target detection and parameter estimation. By modeling the target and clutter impulse responses with Gaussian Mixture Distributions (GMDs), we derived tractable approximations for both the Kullback–Leibler divergence and the Mutual Information Upper Bound (MIUB), thereby avoiding heuristic scalarization.

The proposed Phase-Coded Dream Optimization Algorithm (PC-DOA) effectively solves the resulting non-convex optimization problem via a novel hybrid initialization and adaptive search over the complex circle manifold. Extensive numerical evaluations confirm that the MIUB-based design outperforms conventional methods in achieving a desirable trade-off between detection and estimation metrics. Furthermore, the optimized waveforms possess favorable ambiguity function properties, such as enhanced Doppler tolerance and reduced sidelobe levels.

\section*{Acknowledgments}

The authors utilized Deepseek (https://www.deepseek.com) for English language polishing and grammatical refinement of the manuscript. Specifically, Deepseek was employed to enhance the clarity, coherence, and readability of the text in the Introduction, Discussion, and Conclusion sections. All technical content, equations, figures, and conclusions remain solely the work of the authors. The use of AI-generated content was limited to non-substantive linguistic improvements and did not involve the generation or alteration of scientific ideas, results, or methodologies.

\appendices


\section{Proof of Approximation Error Bounds for MI and KL Divergence}
\label{app:error_analysis}

In this appendix, we provide a detailed analysis of the error bounds associated with the Mutual Information (MI) Approximation (Lemma~\ref{lemma2_concise}) and the Kullback-Leibler (KL) Divergence Approximation (Lemma~\ref{lemma3_concise}).

\subsection{Error Bound of MI Approximation}

The exact mutual information is given by the expression 
\begin{equation}
    I(\mathbf{x};\mathbf{y}) = h(\mathbf{y}) - h(\mathbf{y}|\mathbf{x}),
\end{equation}
where $h(\cdot)$ denotes differential entropy. The approximation $\overline{\mathfrak{E}}(\mathbf{s})$ emerges from estimating the differential entropy $h(\mathbf{y})$ via a first-order Taylor expansion of the log-likelihood function centered at the mean $\mathbf{y}_0 = \mathbb{E}[\mathbf{y}]$. 

Let $p(\mathbf{y})$ represent the true probability density function of $\mathbf{y}$ under hypothesis $\mathcal{H}_1$, corresponding to $p_1(\mathbf{y})$ as specified in \eqref{eq:p1_gmd}. The exact differential entropy is then expressed as 
\begin{equation}
    h(\mathbf{y}) = -\mathbb{E}_{\mathbf{y}}[\log p(\mathbf{y})].
\end{equation}

The approximation introduced in \cite{gu2019information} effectively substitutes $h(\mathbf{y})$ with $\tilde{h}(\mathbf{y}) = -\log p(\mathbf{y}_0)$. This simplification assumes $\mathbf{y}_0 = \mathbf{0}$ for zero-mean stochastic processes and leverages properties of Gaussian determinants. The discrepancy in approximating $I(\mathbf{x};\mathbf{y})$ predominantly originates from the error in estimating $h(\mathbf{y})$. Defining the entropy approximation error as 
\[
\Delta h = h(\mathbf{y}) - \tilde{h}(\mathbf{y})
\]
we can quantify the resulting deviation in the mutual information estimate.

\subsubsection{Taylor Expansion Remainder Analysis}

Consider the log-likelihood function defined as  
\begin{equation}
    \phi(\mathbf{y}) = \log p_1(\mathbf{y}) = \log \left[ \sum_{\ell=1}^{L} \gamma_\ell \, \mathcal{CN}(\mathbf{y}; \mathbf{0}, \mathbf{\Sigma}_\ell) \right].
\end{equation}

Performing a second-order Taylor expansion of $\phi(\mathbf{y})$ around the point $\mathbf{y}_0 = \mathbf{0}$ yields the approximation  
\begin{align}
    \phi(\mathbf{y}) \approx \phi(\mathbf{0}) + \nabla \phi(\mathbf{0})^{\mathrm{H}} \mathbf{y} + \frac{1}{2} \mathbf{y}^{\mathrm{H}} \mathbf{H}(\mathbf{0}) \mathbf{y},
\end{align}
where $\mathbf{H}(\mathbf{0})$ denotes the Hessian matrix of $\phi(\mathbf{y})$ evaluated at $\mathbf{y} = \mathbf{0}$.  

Taking the expectation with respect to $\mathbf{y} \sim p_1(\mathbf{y})$, we obtain  
\begin{align}
    \mathbb{E}[\phi(\mathbf{y})] \approx \phi(\mathbf{0}) + \mathbb{E}[\nabla \phi(\mathbf{0})^{\mathrm{H}} \mathbf{y}] + \frac{1}{2} \mathbb{E}[\mathbf{y}^{\mathrm{H}} \mathbf{H}(\mathbf{0}) \mathbf{y}].
\end{align}

Given that the differential entropy satisfies $h(\mathbf{y}) = -\mathbb{E}[\phi(\mathbf{y})]$ and the approximation employs $\tilde{h}(\mathbf{y}) = -\phi(\mathbf{0})$, the dominant contribution to the approximation error arises from the second-order term:  
\begin{align}
    \Delta h &= h(\mathbf{y}) - \tilde{h}(\mathbf{y}) \approx -\frac{1}{2} \mathbb{E}[\mathbf{y}^{\mathrm{H}} \mathbf{H}(\mathbf{0}) \mathbf{y}] \nonumber \\
             &= -\frac{1}{2} \mathrm{tr}\left( \mathbf{H}(\mathbf{0}) \mathbb{E}[\mathbf{y}\mathbf{y}^{\mathrm{H}}] \right) = -\frac{1}{2} \mathrm{tr}\left( \mathbf{H}(\mathbf{0}) \mathbf{C}_{\mathbf{y}} \right),
    \label{eq:apxMIremainder_app}
\end{align}
where $\mathbf{C}_{\mathbf{y}} = \mathbb{E}[\mathbf{y}\mathbf{y}^{\mathrm{H}}] = \sum_{\ell=1}^{L} \gamma_\ell \mathbf{\Sigma}_\ell$ represents the aggregate covariance matrix of $\mathbf{y}$ under hypothesis $\mathcal{H}_1$.

\subsubsection{Error Bound Derivation}

Using the property $|\mathrm{tr}(\mathbf{A}\mathbf{B})| \leq \|\mathbf{A}\|_{\mathrm{F}} \|\mathbf{B}\|_{\mathrm{F}}$ for Hermitian matrices, we can bound the magnitude of the error:
\begin{equation}
  |\Delta h| \leq \frac{1}{2} \|\mathbf{H}(\mathbf{0})\|_{\mathrm{F}} \|\mathbf{C}_{\mathbf{y}}\|_{\mathrm{F}}.
  \label{eq:MIboundF_app}
\end{equation}

If the spectral norm (largest eigenvalue magnitude) of the Hessian is bounded, $\|\mathbf{H}(\mathbf{0})\|_2 \leq L_H$, then using $\|\mathbf{A}\|_{\mathrm{F}} \leq \sqrt{\mathrm{rank}(\mathbf{A})} \|\mathbf{A}\|_2$, we get $\|\mathbf{H}(\mathbf{0})\|_{\mathrm{F}} \leq \sqrt{n'} L_H$, where $n' = N+N_T-1$. This leads to:
\begin{equation}
  |\Delta h| \leq \frac{\sqrt{n'} L_H}{2} \|\mathbf{C}_{\mathbf{y}}\|_{\mathrm{F}}.
\end{equation}

Alternatively, using $|\mathrm{tr}(\mathbf{A}\mathbf{B})| \leq \|\mathbf{A}\|_2 \mathrm{tr}(\mathbf{B})$ if $\mathbf{B}$ is positive semidefinite:
\begin{equation}
  |\Delta h| \leq \frac{L_H}{2} \mathrm{tr}(\mathbf{C}_{\mathbf{y}}).
  \label{eq:MIboundTrace_app}
\end{equation}

The constant $L_H$ depends on the properties of the GMM components (weights and covariances). The error is thus controlled by the overall variance $\mathrm{tr}(\mathbf{C}_{\mathbf{y}})$ and the curvature of the log-likelihood at the origin.

\subsection{Error Bound of KL Divergence Approximation}

The exact Kullback-Leibler (KL) divergence between the distributions $p_1$ and $p_0$ is denoted by $\mathcal{D}_{\mathrm{KL}}(p_1 \parallel p_0)$. The approximation $\overline{\mathfrak{D}}(\mathbf{s})$, presented in \eqref{eq16_concise}, is constructed by associating each component $\ell$ of $p_1$ with its "closest" counterpart $k^\star(\ell)$ in $p_0$ as determined by the cost function $J(k, \ell)$ defined in \eqref{eq:J_k_ell_concise}. The resulting approximation error is quantified as 
\begin{equation}
    \Delta \mathcal{D} = \mathcal{D}_{\mathrm{KL}}(p_1 \parallel p_0) - \overline{\mathfrak{D}}(\mathbf{s}).
\end{equation}

\subsubsection{Matching and Cross-Term Errors}

The true KL divergence can be written as:
\begin{align}
  &\mathcal{D}_{\mathrm{KL}}(p_1 \parallel p_0) = \int p_1(\mathbf{y}) \log \frac{p_1(\mathbf{y})}{p_0(\mathbf{y})} d\mathbf{y} \nonumber \\
  &= \sum_{\ell=1}^{L} \gamma_\ell \int \mathcal{CN}(\mathbf{y};\mathbf{0},\mathbf{\Sigma}_\ell) \log \frac{\sum_{m=1}^{L} \gamma_m \mathcal{CN}(\mathbf{y};\mathbf{0},\mathbf{\Sigma}_m)}{\sum_{k=1}^{K} \alpha_k \mathcal{CN}(\mathbf{y};\mathbf{0},\mathbf{R}_k)} d\mathbf{y}.
\end{align}

The approximation essentially replaces the denominator $p_0(\mathbf{y})$ inside the logarithm for the $\ell$-th term's integral with $\alpha_{k^\star(\ell)} \mathcal{CN}(\mathbf{y};\mathbf{0},\mathbf{R}_{k^\star(\ell)})$ and simplifies the numerator. The error arises from two main sources:
\begin{enumerate}
    \item \textbf{Matching Error:} The approximation assumes that for each component $\ell$ of $p_1$, the dominant contribution to the KL divergence comes from its interaction with a single component $k^\star(\ell)$ of $p_0$. This ignores the influence of other components $k \neq k^\star(\ell)$.
    \item \textbf{Cross-Term Error:} The structure of the approximation effectively treats the KL divergence as a sum of divergences between individual matched components, neglecting the complex interplay arising from the mixture nature of both $p_1$ and $p_0$.
\end{enumerate}

Let $p_1^{(\ell)}(\mathbf{y}) = \mathcal{CN}(\mathbf{y};\mathbf{0},\mathbf{\Sigma}_\ell)$ and $p_0^{(k)}(\mathbf{y}) = \mathcal{CN}(\mathbf{y};\mathbf{0},\mathbf{R}_k)$. The error can be expressed, after some manipulation involving Jensen's inequality on the logarithm of sums, as related to terms like:
\begin{align}
  \sum_{\ell=1}^{L} \gamma_\ell \log \frac{\sum_{k=1}^{K} \alpha_k \exp(-\mathcal{D}_{\mathrm{KL}}(p_1^{(\ell)} \parallel p_0^{(k)}))}{\alpha_{k^\star(\ell)}}.
\end{align}

This term captures the deviation due to considering only the $k^\star(\ell)$ component versus the weighted mixture in the denominator.

\subsubsection{Error Bound using Pinsker's Inequality (Matching Error)}
Pinsker's inequality relates KL divergence to the total variation distance 
\[
\|\cdot\|_{\mathrm{TV}}:\|\pi_1 - \pi_2\|_{\mathrm{TV}}^2 \leq \frac{1}{2} \mathcal{D}_{\mathrm{KL}}(\pi_1 \parallel \pi_2)
\]

While not directly bounding $\Delta \mathcal{D}$, it highlights that if a component $p_1^{(\ell)}$ is very "far" (in KL sense) from all $p_0^{(k)}$ except $p_0^{(k^\star(\ell))}$, the matching approximation is more accurate. If 
\[
\min_{k \neq k^\star(\ell)} \mathcal{D}_{\mathrm{KL}}(p_1^{(\ell)} \parallel p_0^{(k)}) \geq \delta > 0
\]
the contribution of mismatched terms is reduced. The error related to the matching choice itself can be bounded if the gap between $J(k^\star(\ell), \ell)$ and $J(k, \ell)$ for $k \neq k^\star(\ell)$ is significant.

\subsubsection{Error Bound based on Component Overlap (Cross-Term Error)}

The error is larger when components of $p_1$ significantly overlap with multiple components of $p_0$. If the components are well-separated, the approximation is better. A bound can be derived based on the properties of the component covariances. If, for instance, there exists $\beta > 0$ such that $\mathbf{\Sigma}_\ell \preceq \beta \mathbf{R}_k$ (in the Loewner order) for all $\ell, k$, then the individual KL divergences $\mathcal{D}_{\mathrm{KL}}(p_1^{(\ell)} \parallel p_0^{(k)})$ are bounded. 

The overall error $\Delta \mathcal{D}$ can be shown to be bounded by a function that decreases as the separation between components increases and increases with the overlap factor $\beta$ and the number of components $L, K$. A rough bound structure is:
\begin{equation}
  |\Delta \mathcal{D}| \leq \mathcal{O}\left( \sum_{\ell=1}^L \gamma_\ell \sum_{k \neq k^\star(\ell)} \alpha_k \exp(-\mathcal{D}_{\mathrm{KL}}(p_1^{(\ell)} \parallel p_0^{(k)})) \right).
\end{equation}
This implies that the error is small if the KL divergences to non-matched components are large (i.e., components are well-separated).

In conclusion, the MI approximation error is primarily governed by the local curvature and overall spread of the GMM, while the KL approximation error is mainly influenced by the distinctiveness and separation of the mixture components being compared. Both approximations are reasonable under conditions where these factors are controlled (e.g., low variance for MI, well-separated components for KL).


\section{Proof of Proposition~\ref{prop1}}
\label{sec:lipschitz_proof}

We prove that the objective function
\begin{equation}
F(\mathbf{s}) = \overline{\mathfrak{D}}(\mathbf{s}) + \overline{\mathfrak{E}}(\mathbf{s})
\end{equation}
is Lipschitz continuous over the feasible set 
\begin{equation}
\mathcal{M} = \{\mathbf{s} \in \mathbb{C}^N : |s_n| = c, \ 
\forall n=1,...,N\}.
\end{equation}

Assume the component covariance matrices, $\{\mathbf{R}_k\}$ and $\{\mathbf{Q}_m\}$, are positive definite with all eigenvalues in $[\lambda_{\min}, \lambda_{\max}]$. This also guarantees that the composite covariance 
\(
\mathbf{\Sigma}_\ell(\mathbf{s}) = \mathbf{S}\mathbf{Q}_m\mathbf{S}^H + \mathbf{R}_k
\)
remains positive definite, with its eigenvalues in $[\lambda'_{\min}, \lambda'_{\max}]$ for all feasible $\mathbf{s}$. Here, $\mathbf{S}(\mathbf{s})$ denotes the matrix form generated from $\mathbf{s}$ (e.g., via a Toeplitz operation), and $N_T$ is the transmit dimension associated with $\mathbf{S}$.

\subsection{Lipschitz Continuity of the Composite Covariance Matrix}

Consider any two feasible vectors $\mathbf{s}_1,\mathbf{s}_2 \in \mathcal{M}$. Their respective matrix representations are $\mathbf{S}_1 = \mathbf{S}(\mathbf{s}_1)$ and $\mathbf{S}_2 = \mathbf{S}(\mathbf{s}_2)$. By properties of the mapping from $\mathbf{s}$ to $\mathbf{S}$ and the constant modulus,
\begin{equation}
\|\mathbf{S}_1 - \mathbf{S}_2\|_{\mathrm{F}} = \sqrt{N_T}\|\mathbf{s}_1 - \mathbf{s}_2\|_2.
\label{eq:StoS}
\end{equation}

Now consider the difference of two composite covariance matrices:
\begin{align}
&\mathbf{\Sigma}_\ell(\mathbf{s}_1) - \mathbf{\Sigma}_\ell(\mathbf{s}_2) \nonumber \\
&\quad = \mathbf{S}_1\mathbf{Q}_m\mathbf{S}_1^H - \mathbf{S}_2\mathbf{Q}_m\mathbf{S}_2^H \nonumber \\
&\quad = (\mathbf{S}_1 - \mathbf{S}_2)\mathbf{Q}_m\mathbf{S}_1^H + 
        \mathbf{S}_2\mathbf{Q}_m(\mathbf{S}_1^H - \mathbf{S}_2^H).
\end{align}

Taking the Frobenius norm and using triangle and submultiplicative inequalities, we have
\begin{align}
&\|\mathbf{\Sigma}_\ell(\mathbf{s}_1) - \mathbf{\Sigma}_\ell(\mathbf{s}_2)\|_{\mathrm{F}} \nonumber \\
&\leq \|(\mathbf{S}_1 - \mathbf{S}_2)\mathbf{Q}_m\mathbf{S}_1^H\|_{\mathrm{F}}
   + \|\mathbf{S}_2\mathbf{Q}_m(\mathbf{S}_1^H - \mathbf{S}_2^H)\|_{\mathrm{F}} \nonumber \\
&\leq \|\mathbf{S}_1 - \mathbf{S}_2\|_{\mathrm{F}} \cdot \|\mathbf{Q}_m\|_{\mathrm{F}} \cdot \|\mathbf{S}_1^H\|_{\mathrm{F}} 
     \nonumber \\
& \quad + \|\mathbf{S}_2\|_{\mathrm{F}} \cdot \|\mathbf{Q}_m\|_{\mathrm{F}} \cdot \|\mathbf{S}_1 - \mathbf{S}_2\|_{\mathrm{F}}.
\label{eq:covbound}
\end{align}

Because all $\mathbf{s} \in \mathcal{M}$ have constant modulus, there exists $E_s > 0$, such that $\|\mathbf{S}_i\|_{\mathrm{F}} = \sqrt{N_T E_s}$ for $i=1,2$. Also, $\|\mathbf{Q}_m\|_{\mathrm{F}} \leq \lambda_{\max} n_T$ by spectral norm.
Combining, we obtain:
\begin{align}
&\|\mathbf{\Sigma}_\ell(\mathbf{s}_1) - \mathbf{\Sigma}_\ell(\mathbf{s}_2)\|_{\mathrm{F}} \nonumber \\
&\leq 2\sqrt{N_T E_s}\lambda_{\max} \sqrt{N_T} \|\mathbf{s}_1 - \mathbf{s}_2\|_2 \nonumber \\
&= 2 N_T \sqrt{E_s} \lambda_{\max} \|\mathbf{s}_1-\mathbf{s}_2\|_2.
\end{align}
Define $L_{\Sigma} = 2N_T\sqrt{E_s}\lambda_{\max}$, thus, 
\begin{equation}
\|\mathbf{\Sigma}_\ell(\mathbf{s}_1) - \mathbf{\Sigma}_\ell(\mathbf{s}_2)\|_{\mathrm{F}} 
\leq L_{\Sigma}\|\mathbf{s}_1 - \mathbf{s}_2\|_2.
\end{equation}

\subsection{Lipschitz Continuity of the Objective Function Components}

\subsubsection{Mutual Information Term $\overline{\mathfrak{E}}(\mathbf{s})$}

Recall 
\(
\overline{\mathfrak{E}}(\mathbf{s}) = \log C_R - \log G(\mathbf{s})
\)
with 
\(
G(\mathbf{s}) = \sum_{\ell}\gamma_\ell\det(\mathbf{\Sigma}_\ell(\mathbf{s}))^{-1}.
\)
Consider two feasible vectors $\mathbf{s}_1,\mathbf{s}_2$:
\begin{align}
|G(\mathbf{s}_1) &- G(\mathbf{s}_2)| \nonumber \\
&\leq \sum_{\ell} \gamma_\ell 
  \left| \det(\mathbf{\Sigma}_\ell(\mathbf{s}_1))^{-1} 
        - \det(\mathbf{\Sigma}_\ell(\mathbf{s}_2))^{-1} \right| \nonumber \\
&\leq \sum_{\ell} \gamma_\ell \frac{
\left| \det(\mathbf{\Sigma}_\ell(\mathbf{s}_2)) - 
       \det(\mathbf{\Sigma}_\ell(\mathbf{s}_1)) \right|
}
{\min\{ \det(\mathbf{\Sigma}_\ell(\mathbf{s}_1)), \det(\mathbf{\Sigma}_\ell(\mathbf{s}_2)) \}^2}.
\end{align}

Matrix analysis (see e.g.~\cite{horn2012matrix}) shows 
\[
|\det(\mathbf{A}) - \det(\mathbf{B})| \leq D \|\mathbf{A}-\mathbf{B}\|_\mathrm{F}
\]
if both $\mathbf{A},\mathbf{B}$ are Hermitian with all eigenvalues in 
$[\lambda'_{\min},\lambda'_{\max}]$, where $D$ depends only on $n'$, $\lambda'_{\min}$, $\lambda'_{\max}$. Thus, for some constant $L_h$,
\begin{equation}
|G(\mathbf{s}_1) - G(\mathbf{s}_2)| \leq L_h L_\Sigma \|\mathbf{s}_1-\mathbf{s}_2\|_2.
\end{equation}

Since $-\log(\cdot)$ is Lipschitz continuous for arguments bounded below ($G(\mathbf{s}) \geq \epsilon > 0$ in $\mathcal{M}$), $\overline{\mathfrak{E}}(\mathbf{s})$ is Lipschitz with constant
\begin{equation}
L_{\overline{E}} = \frac{L_h L_\Sigma}{\epsilon}.
\end{equation}

\subsubsection{KL Divergence Term $\overline{\mathfrak{D}}(\mathbf{s})$}

Recall
\[
J(k,\ell,\mathbf{s}) = \log \frac{\gamma_\ell}{\alpha_k} 
+ \mathrm{tr}\left(\mathbf{R}_k^{-1} \mathbf{\Sigma}_\ell(\mathbf{s})\right)
- \log \det (\mathbf{R}_k^{-1}\mathbf{\Sigma}_\ell(\mathbf{s}))
\]

For the trace part, using norm inequalities,
\begin{align}
|\mathrm{tr}(\mathbf{R}_k^{-1} \Delta \mathbf{\Sigma}_\ell)| 
&\leq \|\mathbf{R}_k^{-1}\|_2 \,\mathrm{tr}(|\Delta \mathbf{\Sigma}_\ell|) \nonumber \\
&\leq n' \lambda_{\min}^{-1} \|\Delta \mathbf{\Sigma}_\ell\|_{\mathrm{F}} \nonumber \\
&\leq n' \lambda_{\min}^{-1} L_\Sigma \|\mathbf{s}_1 - \mathbf{s}_2\|_2.
\end{align}
The log-determinant term is also Lipschitz (see e.g.~\cite{horn2012matrix}):
\begin{align}
&|\log \det (\mathbf{R}_k^{-1}\mathbf{\Sigma}_\ell(\mathbf{s}_1)) 
    - \log \det (\mathbf{R}_k^{-1}\mathbf{\Sigma}_\ell(\mathbf{s}_2))| \nonumber \\
&\qquad \leq n' (\lambda'_{\min})^{-1} L_\Sigma \|\mathbf{s}_1 - \mathbf{s}_2\|_2.
\end{align}

Summing both contributions, $J(k,\ell,\mathbf{s})$ is Lipschitz with
\begin{equation}
L_{J,k,\ell} = \left( n' (\lambda_{\min})^{-1} + n' (\lambda'_{\min})^{-1} \right) 
               L_\Sigma.
\end{equation}

Since $\overline{\mathfrak{D}}(\mathbf{s})$ is constructed as a weighted sum, its Lipschitz constant is
\begin{equation}
L_{\overline{D}} = \max_\ell \sum_k \alpha_k \gamma_\ell L_{J,k,\ell}.
\end{equation}

In short, since both $\overline{\mathfrak{E}}(\mathbf{s})$ and $\overline{\mathfrak{D}}(\mathbf{s})$ are Lipschitz continuous on $\mathcal{M}$, their sum $F(\mathbf{s})$ is also Lipschitz on $\mathcal{M}$, i.e.,
\begin{equation}
\left| F(\mathbf{s}_1) - F(\mathbf{s}_2) \right| 
\leq L_{\mathrm{F}} \|\mathbf{s}_1 - \mathbf{s}_2\|_2, \quad \forall \mathbf{s}_1, \mathbf{s}_2 \in \mathcal{M},
\end{equation}
where $L_{\mathrm{F}} = L_{\overline{E}} + L_{\overline{D}}$.


\bibliographystyle{IEEEtran}
\small
\bibliography{refs}

%
%
%
%

\vfill

\end{document}